\newtheorem{observation}{Observation}
\newcommand{\IF}[1]{\textbf{if} {#1}}
\newcommand{\THEN}{\textbf{then} }
\newcommand{\ELSE}{\textbf{else} }
\newcommand{\ENDIF}{\textbf{endif} }
\newcounter{ALGOline}
\newcounter{ALGOAlgo}
\newcommand{\LINESEP}{.}
\newcommand{\LINESTYLE}{\tiny}
\newcommand{\INITALGO}[1]{\setcounter{ALGOAlgo}{#1}}
\newcommand{\INITLINE}{\setcounter{ALGOline}{1}}
\newcommand{\NA}{{\addtocounter{ALGOAlgo}{1}{\INITLINE}}}
\newcommand{\AL}{
    \LINESTYLE{\arabic{ALGOAlgo}\LINESEP
    \ifnum \value{ALGOline}<10 0\fi
    \LINESTYLE{\arabic{ALGOline}}
    \addtocounter{ALGOline}{1}
}}
\newcommand{\BEGLIST}{\begin{list}{}{\partopsep -3pt \parsep -2pt \listparindent -0pt \labelwidth .5in}}
\newcommand{\ENDLIST}{\end{list}}
\newcommand{\ie}{\emph{i.e., }}
\newcommand{\eg}{\emph{e.g., }}
\begin{document}

\INITALGO{0}
\INITLINE

\title{Self-stabilizing Deterministic Gathering} 

\titlerunning{Self-stabilizing Deterministic Gathering}

\author{Yoann Dieudonn\'e\inst{1} \qquad Franck Petit\inst{2}}

\authorrunning{Dieudonn\'e and Petit}   

\tocauthor{Yoann Dieudonn\'e, Franck Petit}

\institute{MIS CNRS,  Universit\'{e} de Picardie Jules Verne Amiens, France
\and
INRIA, LIP UMR 5668, Universit\'e de Lyon / ENS Lyon, France}
\date{}
\maketitle

\begin{abstract}
In this paper, we investigate the possibility to deterministically solve the gathering problem (GP)
with weak robots (anonymous, autonomous, disoriented, deaf and dumb, and oblivious).  
We introduce strong multiplicity detection as the ability for the robots to detect 
the exact number of robots located at a given position. 
We show that with strong multiplicity detection, there exists a deterministic 
self-stabilizing algorithm solving GP for $n$ robots if, and only if, $n$ is odd.

\noindent \textbf{Keywords}: Distributed Coordination, Gathering, Mobile Robot Networks, Self-stabilization.
\end{abstract}
	
\section{Introduction}

The distributed systems considered in this paper are teams (or swarms) of {\em mobile} robots (sensors or agents).  
Such systems supply the ability to collect (to sense) environmental data such as 
temperature, sound, vibration, pressure, motion, etc.
The robots use these sensory data as an input in order to act in a given (sometimes dangerous) physical environment.
Numerous potential applications exist for such multi-robot systems, \eg environmental monitoring, large-scale construction,
risky area surrounding, exploration of an unknown area. 
All these applications involve basic cooperative tasks such as pattern formation, 
gathering, scatter, leader election, flocking, etc.

Among the above fundamental coordination tasks, we address the {\em gathering} 
(or {\em Rendez-Vous}) problem.  This problem can be stated as follows: robots, initially located 
at various positions, gather at the same position in finite time and remain at this position thereafter. 
The difficulty to solve this problem greatly depends on the system settings, \eg 
whether the robots can remember past events or not, their means of communication, their ability
to share a global property like observable IDs, sense of direction, global coordinate, etc.  For
instance, assuming that the robots share a common global coordinate system or have (observable) IDs
allowing to differentiate any of them, 
it is easy to come up with a deterministic distributed algorithm for that problem.  
Gathering turns out to be very difficult to solve with {\em weak} robots, \ie devoid of 
$(1)$ any (observable) IDs allowing to differentiate any of them ({\em anonymous}), 
$(2)$ any central coordination mechanism or scheduler ({\em autonomous}),
$(3)$ any common coordinate mechanism or common sense of direction ({\em disoriented}), 
$(4)$ means of communication allowing them to communicate directly, \eg by radio frequency ({\em deaf and dumb}), and  
$(5)$ any way to remember any previous observation nor computation performed in any previous step ({\em oblivious}).
Every movement made by a robot is then the result of a computation having 
observed positions of the other robots as a only possible input. 
With such settings, assuming that robots are points evolving on the plane, 
no solution exists for the gathering problem if the system contains two robots only~\cite{SY99}.  
It is also shown in~\cite{Prencipe07} that gathering can be
solved only if the robots have the capability to know whether several robots are located at 
the same position ({\em multiplicity detection}). Note that a {\em strong} form of such an ability
is that the robot are able to count the exact number of robots located at the same position.
A {\em weaker} form consists in considering the detector as an abstract device able
to say if any robot location contains either exactly one or more than one robot.  

In this paper, we investigate the possibility to {\em deterministically} solve the gathering problem
with {\em weak robots} (\ie anonymous, autonomous, disoriented, deaf and dumb, and oblivious).  
This problem has been extensively studied in the literature assuming various 
settings.  For instance, the robots move either among the nodes of a graph~\cite{FlocchiniKKSS04,KlasingMP08}, 
or in the plane~\cite{AP06,AOSY99,CFPS03,FPSW05,P01,Prencipe07,SY99},   
their visibility can be limited (visibility sensors are supposed to be accurate within a constant range, 
and sense nothing beyond this range)~\cite{FPSW05,
SDY09}, robots are prone to faults~\cite{AP06,DefagoGMP06}.


In this paper, we address the stabilization aspect of the gathering problem. 
A deterministic system is (self-)stabilizing if, regardless of the initial 
states of the computing units, it is guaranteed to converge to the intended behavior in a finite number of 
steps~\cite{D00}.
To our best knowledge, all the above solutions assume that in the initial configuration, 
no two robots are located at the same position.  So, effectively, as already noticed in~\cite{DK02,DP09}, 
this implies that none of them is ``truly'' self-stabilizing---initial configurations where
robots are located at the same positions are avoided.  Note that surprisingly, such an assumption prevents
to initiate the system where the problem is solved, \ie initially all the robots occupy the same position. 

In this paper, we study the gathering problem assuming any arbitrary initial configurations, that is
in which some robots can share the same positions.  Clearly, assuming weak multiplicity detection
(each robot location contains either exactly one or more than one robot), the problem cannot be solved 
deterministically.  Informally, if all the robots are at exactly two positions, then there is
no way to maintain a particular position as an invariant.  So, there are some executions where the system behaves
as if it contains exactly two robots, leading to the impossibility result in~\cite{SY99}.    
We introduce the concept of strong multiplicity detection---the robot are able to count the exact 
number of robots located at the same position.   Even with such capability, the problem cannot be solved 
deterministically, if the number of robots is even.  The proof is similar as above: If initially the 
robots occupy exactly two positions, then there is no way to maintain a particular position as an invariant.
Again, the impossibility result in~\cite{SY99} holds.
By contrast, we show that with an odd number of robots, the problem is solvable.  Our proof is constructive, as 
we present and prove a deterministic algorithm for that problem.  The proposed solution has the
nice property of being self-stabilizing since no initial configuration is excluded.  

In the next section (Section~\ref{sec:model}), we describe the distributed 
system and the problem we consider in this paper. 
Our main result with its proof is given in Section~\ref{sec:g}. 
We conclude this paper in Section~\ref{sec:conc}.
Due to the lack of space, some proofs have been moved in the Annexes section. 


\section{Preliminaries}
\label{sec:model}

In this section, we define the distributed system and the problem considered in this paper.

\subsection{Distributed Model.}
We adopt the semi-synchoronous model introduced in~\cite{SY96}, below referred to as~$SSM$.  
The \emph{distributed system} considered in this paper consists of $n$ robots  
$r_{1}, r_{2},\cdots , r_{n}$---the subscripts $1,\ldots ,n$ are used for notational purpose only.
Each robot $r_{i}$, viewed as a point in the Euclidean plane, moves on this two-dimensional 
space unbounded and devoid of any landmark.  It is assumed that two or more robots may simultaneously occupy the same physical location.

Any robot can observe, compute and move with infinite decimal precision.
The robots are equipped with sensors enabling to detect the instantaneous position of the other robots in the plane. 
In particular, we distinguish two types of multiplicity detection : \emph{weak multiplicity detection} and  \emph{strong multiplicity detection}.

\begin{definition}[Weak multiplicity detection]\cite{CFPS03,FlocchiniIPS08}
\label{def:weak}
The robots have weak multiplicity detection if, for every point $p$, their sensors can detect if there is no robot, there is one robot, or there are more than one robot. In the latter case, the robot might not be capable of determining the exact number of robots.
\end{definition}

\begin{definition}[Strong multiplicity detection]
\label{def:strong}
The robots have strong multiplicity detection if, for every point $p$, their sensors can detect the number of robots on $p$.
\end{definition}


  
Each robot has its own local coordinate system and unit measure.  
The robots do not agree on the orientation of the axes of their local coordinate system, 
nor on the unit measure. 
They are \emph{uniform} and \emph{anonymous}, i.e, they all have the same program using no 
local parameter (such that an identity) 
allowing to differentiate any of them.  
They communicate only by observing the position of the others and they are \emph{oblivious}, i.e.,
none of them can remember any previous observation nor computation performed 
in any previous step. 

Time is represented as an infinite sequence of time instants $0, 1, \ldots, j, \ldots$ 
Let $\mathcal{P}(t)$ be the set of the positions in the plane occupied by the $n$ 
robots at time $t$. For every $t$, $\mathcal{P}(t)$ is called the \emph{configuration} 
of the distributed system in $t$. Given any point $p$, $|p|$ denotes the number of robots located on $p$. Note that, if the robots do not have the multiplicity detection then $|p|\leq1$ for all the robots.
$\mathcal{P}(t)$ expressed in the local coordinate system of any robot $r_i$ is called a \emph{view}.
At each time instant $t$, each robot $r_i$ is either {\it active} or {\it inactive}. 
The former means that, during the computation \emph{step} $(t,t+1)$, using 
a given algorithm, $r_i$ computes in its local coordinate system a position $p_i(t+1)$ depending 
only on the system configuration at $t$, and moves towards $p_i(t+1)$---$p_i(t+1)$ can be equal to
$p_i(t)$, making the location of $r_i$ unchanged.
In the latter case, $r_i$ does not perform any local computation and remains at the same position. 
 In every single activation, the distance 
traveled by any robot $r$ is bounded by $\sigma_r$. So, if the destination point computed by $r$ is farther than $\sigma_r$, then $r$ moves 
toward a point of at most $\sigma_r$. This distance may be different between two robots.

The concurrent activation of robots is modeled by 
the interleaving model in which the robot activations are driven by a \emph{fair scheduler}.  
At each instant $t$, the scheduler 
arbitrarily activates a (non empty) set of robots.  
Fairness means that every robot is infinitely often activated by the scheduler.

\subsection{Specification}
The \emph{ Gathering Problem} ($\mathcal{GP}$) is to design a distributed protocol $P$ for $n$ mobile robots so that the following properties are true : 

\begin{itemize}
\item {\emph{Convergence}:} Regardless of the initial positions of the robots on the plane, all the robots are
located at the same position in finite time. 
\item {\emph{Closure}:} Starting from a configuration where all the robots are located at the same position, all the robots
are located at the same position thereafter. 
\end{itemize}



\section{Gathering with strong multiplicity detection}
\label{sec:g}

In this section, we prove the following theorem :

\begin{theorem}
\label{theo:gathodd}
With strong multiplicity detection, there exists a deterministic self-stabilizing algorithm solving $\mathcal{GP}$ for $n$ robots if, and only if, $n$ is odd.
\end{theorem}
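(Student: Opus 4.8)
The statement has two directions. The impossibility direction (if a self-stabilizing solution exists, then $n$ is odd) is essentially argued already in the introduction: the plan is to formalize the sketch given there. Suppose $n$ is even and consider an initial configuration in which the robots occupy exactly two distinct positions $p$ and $q$ with $|p|=|q|=n/2$. By symmetry of the two point-sets together with the multiplicity counts, any deterministic algorithm must be invariant under the isometry of the plane that swaps $p$ and $q$; we then exhibit a fair execution (the scheduler activating one robot from each side symmetrically, or activating all robots) that preserves this bilateral symmetry forever, so the two groups never merge. This reduces to the two-robot impossibility of~\cite{SY99}. The remaining work here is to state precisely why the symmetry is preserved under $SSM$ with the bounded-move adversary, and why fairness can be respected.

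The interesting direction is the ``if'': for odd $n$ we must construct a deterministic self-stabilizing algorithm and prove both \emph{Convergence} and \emph{Closure}. The plan is to use a rotation/reflection-invariant function of the configuration to single out a canonical target point, then have every robot move toward it. The natural candidate, exploitable precisely because $n$ is odd, is a \emph{weighted} notion of center: one can check that the weighted center of gravity (each occupied point $p$ weighted by $|p|$) is computable by every robot from its view, is independent of local coordinate systems, and—crucially—never coincides with the configuration being ``balanced'' into two equal halves, since oddness forbids the obstructing symmetric configuration. Alternatively, and probably cleaner for the closure/multiplicity bookkeeping, one picks the point of maximal multiplicity when it is unique, and otherwise a geometrically defined point (e.g. derived from the smallest enclosing circle or the convex hull) that oddness guarantees to be well-defined; robots at other positions move toward it while robots already there stay put.

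The main obstacle is making the algorithm \emph{self-stabilizing} and robust against the semi-synchronous bounded-distance scheduler. Because robots are oblivious and moves may be interrupted after only $\sigma_r$, a robot can be ``caught in transit'' between two positions, momentarily changing which point looks canonical; the algorithm must be designed so that the chosen target point is a genuine invariant of the dynamics, i.e. once robots start converging to it, no intermediate configuration redirects them elsewhere, and the target cannot jump. I would handle this by proving a monotonicity lemma: under the algorithm, the multiplicity at the target point is non-decreasing over time, and strictly increases whenever an active robot not at the target exists, which by fairness forces convergence in finitely many steps; oddness is what guarantees the tie-breaking used to define the target is never ambiguous along the way. \emph{Closure} is then immediate, since if all $n$ robots are already co-located that point is the unique maximum-multiplicity point and every robot's computed destination is its current position. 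Assembling these lemmas—well-definedness of the target under oddness, invariance under the scheduler, and the monotonicity/progress argument—yields the theorem.
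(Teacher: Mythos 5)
Your high-level decomposition (impossibility for even $n$ via a symmetric two-point configuration; a constructive algorithm for odd $n$ built around the unique maximal-multiplicity point and, failing that, the smallest enclosing circle) matches the paper's. But as a proof the proposal has genuine gaps in the constructive direction. First, your primary candidate target, the multiplicity-weighted center of gravity, does not work for oblivious robots under $SSM$: since moves can be truncated at $\sigma_r$ and only a subset of robots is activated, the center of gravity drifts between steps, so it is not a stable invariant of the dynamics and cannot serve as a fixed gathering point (it yields convergence arguments, not exact gathering). Oddness has nothing to do with its well-definedness; where oddness is actually needed is in the case of exactly two maximal points: after everyone has collapsed onto $\{p_{max1},p_{max2}\}$, an odd total forces $|p_{max1}|\neq|p_{max2}|$, breaking the tie. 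Your proposal never isolates this case, which is the only place parity enters the positive direction.

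Second, your central ``monotonicity lemma'' (multiplicity at the target is non-decreasing and strictly increases whenever some active robot is off-target) is false for any algorithm of this shape: when $|Max\mathcal{P}|\geq 3$ the target must be a geometric point (e.g., the center of $SEC$) that may initially carry no robots, and a robot activated there moves only a bounded distance, so no multiplicity increases at that step. The paper's progress measure is instead the radius of $SEC$, which strictly decreases (its Lemma~\ref{lem:triv}), combined with a nontrivial geometric lemma (Lemma~\ref{lem:tam}, via convex-hull/sector Properties~\ref{lem:sec} and~\ref{obs:hull}) showing the old center stays \emph{inside} the new $SEC$ even though $SEC$ and its center change --- precisely the ``target cannot jump'' issue you flag but do not resolve. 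Finally, you do not address the hazard that robots in transit collide and spuriously create a new maximal point, flipping the case analysis; the paper needs the $move\_to\_carefully$ primitive and the arguments of Lemmas~\ref{lem:cas1} and~\ref{lem:cas2} to rule this out. Without these three ingredients the convergence argument does not go through. (Your impossibility sketch and the closure argument are fine and match the paper's.)
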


As mentionned in the introduction, even with strong multiplicity detection there do not exist any deterministic algorithm solving $\mathcal{GP}$ for an even number of robots. So, to prove Theorem~\ref{theo:gathodd} we first give a deterministic self-stabilizing algorithm solving $\mathcal{GP}$ for an odd number of robots having the strong multiplicity detection. Then, we prove the correctness of the algorithm.



\subsection{Deterministic Self-stabilizing Algorithm for an odd number of robots.}
In this subsection, we give a deterministic self-stabilizing algorithm solving $\mathcal{GP}$ for an odd number of robots.  We first 
provide particular notations, basic definitions and properties that we use for symplifying 
the design and proofs of the protocol. Next, the protocol is presented.

\subsubsection{Notations, Basic Definitions and Properties.}

Given a configuration $\mathcal{P}$, $Max\mathcal{P}$ indicates the set of all the points $p$ such that $|p|$ is maximal. In other terms, $\forall p_i \in Max\mathcal{P}$ and $\forall p_j \in \mathcal{P}$, we have $|p_i|\geq |p_j|$. $|Max\mathcal{P}|$ will be the cardinality of $Max\mathcal{P}$.

\begin{remark}
Since the robots have the strong multiplicity detection, then they are able to compute $|p|$ for every point $p\in\mathcal{P}$. In particular, all the robots can determine $Max\mathcal{P}(t)$ at each time instant $t$.  
\end{remark}

Given three distinct points $r,r'$ and $c$ in the plane, we say that the two half-lines $[c,r)$ and $[c,r')$ divide the plane into two \emph{sectors} if and only if  
\begin{itemize}

\item either $r,r'$ and $c$ are not colinear,

\item or $r,r'$ and $c$ are colinear and $c$ is between $r$ and $r'$ on the segment $[r,r']$.
\end{itemize} 

If it exists then this pair of sectors is denoted by $\{\underline{rcr'},\overline{rcr'}\}$ and we assume that the two half-lines $[c,r)$ and $[c,r')$ do not belong to any sector in $\{\underline{rcr'},\overline{rcr'}\}$ . Note that, if the three points $r,r'$ and $c$ are not colinear then one of two sectors is \emph{convex} (angle centered at $c$ between $r$ and $r' \leq 180^o$)  and the other one is \emph{concave} (angle centered at $c$ between $r$ and $r' > 180^o$). Otherwise, the three points $r,r'$ and $c$ are colinear and the two sectors are convex and more precisely they are \emph{straight} (both conjugate angles centered at $c$ between $r$ and $r'$ are equal to $180^o$).


\begin{definition}[Smallest enclosing circle]\cite{DK02} 
\label{def:sec}
Given a set $\mathcal{P}$ of $n\geq2$ points $p_1,p_2,\cdots,p_n$ on the plane, the smallest enclosing circle of $\mathcal{P}$ , called $SEC(\mathcal{P})$, is the smallest circle enclosing all the positions in $\mathcal{P}$. It passes either through two of the positions that are on the same diameter (opposite positions), or through at least 
three of the positions in $\mathcal{P}$. 
\end{definition}

When no ambiguity arises, $SEC(\mathcal{P})$ will be shortly denoted by $SEC$ and $SEC(\mathcal{P})\cap\mathcal{P}$ will indicate the set of all the points both on $SEC(\mathcal{P})$ and $\mathcal{P}$. Besides, we will say that a robot $r$ is inside $SEC$ if, and only if, there is not located on the circumference of $SEC$.  
In any configuration $\mathcal{P}$, $SEC$ is unique and can be computed in linear time~\cite{Chrystal85}.

Given a set $\mathcal{P}$ of $n\geq2$ points $p_1,p_2,\cdots,p_n$ on the plane and $SEC(\mathcal{P})$ its smallest enclosing circle,  $\mathcal{R}ad(SEC(\mathcal{P}))$ will indicate the length of the radius of $SEC(\mathcal{P})$.

The next lemma contains a simple fact.
\begin{lemma}
\label{lem:triv}
Let $\mathcal{P}_1$ be an arbitrary configuration of $n$ points. Let $\mathcal{P}_2$ be a configuration obtained by pushing inside $SEC(\mathcal{P}_1)$ all the points which are in $\mathcal{P}_1\cap SEC(\mathcal{P}_1)$. We have $\mathcal{R}ad(SEC(\mathcal{P}_2))< \mathcal{R}ad(SEC(\mathcal{P}_1))$.
\end{lemma}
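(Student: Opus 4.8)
The plan is to argue that moving all boundary points of $\mathcal{P}_1$ strictly inside $SEC(\mathcal{P}_1)$ leaves $SEC(\mathcal{P}_1)$ as an enclosing circle of $\mathcal{P}_2$ that is \emph{not} tight in the sense required by Definition~\ref{def:sec}, and hence cannot be the smallest enclosing circle of $\mathcal{P}_2$. Concretely, write $O$ for the center and $\rho = \mathcal{R}ad(SEC(\mathcal{P}_1))$ for the radius of $SEC(\mathcal{P}_1)$. By construction every point of $\mathcal{P}_2$ lies at distance $<\rho$ from $O$: the points that were strictly inside $SEC(\mathcal{P}_1)$ are untouched and were already at distance $<\rho$, and the points that were on the circumference of $SEC(\mathcal{P}_1)$ have been ``pushed inside'', so by the meaning of ``inside'' given just before the lemma they are now at distance $<\rho$ from $O$ as well.

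First I would record the elementary compactness fact that a finite set of points all lying at distance $<\rho$ from $O$ is enclosed by a circle of radius strictly smaller than $\rho$: take $\delta = \max_{p\in\mathcal{P}_2}\mathrm{dist}(O,p)$, which is a maximum over a finite nonempty set and therefore attained, so $\delta<\rho$; then the circle centered at $O$ of radius $\delta$ encloses all of $\mathcal{P}_2$. (The set $\mathcal{P}_2$ is nonempty since $n\geq 2$; if one worries about the degenerate possibility that all of $\mathcal{P}_2$ has collapsed to the single point $O$, then $\mathcal{R}ad(SEC(\mathcal{P}_2))=0<\rho$ trivially.) Since $SEC(\mathcal{P}_2)$ is by definition the \emph{smallest} circle enclosing $\mathcal{P}_2$, its radius is at most $\delta$, hence $\mathcal{R}ad(SEC(\mathcal{P}_2))\leq\delta<\rho=\mathcal{R}ad(SEC(\mathcal{P}_1))$, which is exactly the claim.

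The only subtlety — and the one step worth stating carefully rather than the ``simple fact'' the lemma's preamble suggests — is the justification that $\delta$ is strictly less than $\rho$, i.e. that the maximum over the finite set $\mathcal{P}_2$ of distances-to-$O$ is not equal to $\rho$. This is where the hypothesis that \emph{all} boundary points are pushed inside is used: if even one point of $\mathcal{P}_1\cap SEC(\mathcal{P}_1)$ had been left on the circumference, the maximum could still be $\rho$ and no shrinkage would be guaranteed. Given the hypothesis, every one of the finitely many points of $\mathcal{P}_2$ is at distance strictly below $\rho$, so their maximum — attained, since the set is finite — is also strictly below $\rho$. I do not expect any real obstacle here; the argument is a one-line application of finiteness once the ``inside'' bookkeeping is made explicit.
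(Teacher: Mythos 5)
Your proof is correct. The paper itself states Lemma~\ref{lem:triv} as ``a simple fact'' and provides no proof (neither in the body nor in the annexes), and your argument --- every point of $\mathcal{P}_2$ lies at distance strictly less than $\mathcal{R}ad(SEC(\mathcal{P}_1))$ from the old center, the maximum of finitely many such distances is attained and still strictly smaller, and the circle of that radius about the old center is an enclosing circle witnessing $\mathcal{R}ad(SEC(\mathcal{P}_2))<\mathcal{R}ad(SEC(\mathcal{P}_1))$ --- is exactly the elementary justification the authors had in mind, with the strictness and the degenerate collapsed case handled carefully.
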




Let $\mathcal{S}$ and $C$ be respectively a sector in $\{\underline{pcp'},\overline{pcp'}\}$  and a circle centered at $c$. We denote by $arc(C,\mathcal{S})$ the arc of the circle $C$ inside $\mathcal{S}$. 
Given a set $\mathcal{P}$ of $n\geq2$ points $p_1,p_2,\cdots,p_n$ on the plane and $SEC(\mathcal{P})$ its smallest enclosing circle centered at $c$, we say that $p$ and $p'$ are \emph{adjacent on $SEC(\mathcal{P})$} if, and only if, $p$ and $p'$ are in $\mathcal{P}$ and there exists one sector $\mathcal{S}\in \{\underline{pcp'},\overline{pcp'}\}$  such that there is no point in $arc(SEC(\mathcal{P}),\mathcal{S})\cap\mathcal{P}$.

The following property is fundamental about smallest enclosing circles 

\begin{property}\cite{Cieliebak04}
\label{ciele}
Let $\mathcal{P}$ and $c$ be respectively a set of $n\geq2$ points $p_1,p_2,\cdots,p_n$ on the plane and the center of $SEC(\mathcal{P})$. If $p$ and $p'$ are \emph{adjacent on $SEC(\mathcal{P})$} then, there does not exist a concave sector $\mathcal{S}$ in $\{\underline{pcp'},\overline{pcp'}\}$  such that there is no point in  $arc(SEC(\mathcal{P}),\mathcal{S})\cap\mathcal{P}$.
\end{property}

Property~\ref{lem:sec} is more general than Property~\ref{ciele}

\begin{property}
\label{lem:sec}
Let $\mathcal{P}$ and $c$ be respectively a set of $n\geq2$ points $p_1,p_2,\cdots,p_n$ on the plane and the center of $SEC(\mathcal{P})$. If $p$ and $p'$ are in $\mathcal{P}$ then, there does not exist a concave sector $\mathcal{S}$ in $\{\underline{pcp'},\overline{pcp'}\}$  such that there is no point in $\mathcal{S}\cap\mathcal{P}$.
\end{property}

\begin{proof}
Assume by contradiction that $p$ and $p'$ are in $\mathcal{P}$ and, there exists a concave sector $\mathcal{S}$ in $\{\underline{pcp'},\overline{pcp'}\}$  such that there is no point in $\mathcal{S}\cap\mathcal{P}$. So, there is no point in $arc(SEC(\mathcal{P}),\mathcal{S})\cap\mathcal{P}$. We deduce that there exists a concave sector $\mathcal{S'}$ in $\{\underline{qcq'},\overline{qcq'}\}$  such that $q$ and $q'$ are adjacent on $SEC(\mathcal{P})$ and there is no point in  $arc(SEC(\mathcal{P}),\mathcal{S'})\cap\mathcal{P}$. Contradiction with Property~\ref{ciele}.
\end{proof}

Figure~\ref{fig:lem} illustrates Property~\ref{lem:sec}.

\begin{figure}[!htbp]
\begin{center}
    \epsfig{file=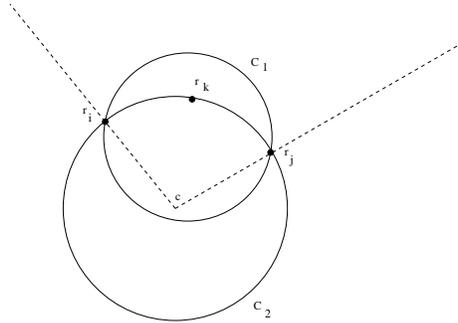, width=0.5\linewidth}
 \caption{$C_2$ is an enclosing circle for the three points $r_i,r_j$ and $r_k$. However, there is no point in the intersection between $C_2$ and the concave sector formed by $r_i$, $r_j$ and the center $c$ of $C_2$. So, $C_2$ can be replace by a smaller enclosing circle, here $C_1$, even if all the points are on the circumference of $C_2$.\label{fig:lem}}
\end{center}
\end{figure}

\begin{observation}
\label{toto}
Given three colinear points, c,r,r'. If $c$ is on the segment $[r,r']$, then $c$ cannot be on the circumference of a circle enclosing $r$ and $r'$.
\end{observation}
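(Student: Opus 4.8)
The plan is to argue by contradiction. Suppose, contrary to the statement, that $c$ lies on the circumference of some circle $C$ with center $O$ and radius $\rho$ that encloses both $r$ and $r'$. Then the distance from $O$ to $c$ equals $\rho$, while the distances from $O$ to $r$ and to $r'$ are each at most $\rho$. Since $c$, $r$, $r'$ are three distinct colinear points and $c$ lies on the segment $[r,r']$, the point $c$ is strictly between $r$ and $r'$ on their common line; treating points as position vectors, write $c = (1-t_0)\,r + t_0\,r'$ with $t_0 \in (0,1)$.

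The key step is a strict convexity argument for the squared distance from $O$ along that line. Consider $g(t) = \|O - (1-t)\,r - t\,r'\|^2$ for $t$ ranging over the reals. This is a quadratic in $t$ whose leading coefficient is $\|r' - r\|^2$, which is positive because $r \neq r'$; hence $g$ is strictly convex. Evaluating at $t_0 = (1-t_0)\cdot 0 + t_0\cdot 1$ and using strict convexity gives $g(t_0) < (1-t_0)\,g(0) + t_0\,g(1) \le \max\{g(0), g(1)\}$, that is, $\|O-c\|^2 < \max\{\|O-r\|^2, \|O-r'\|^2\} \le \rho^2$.

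Hence $\|O-c\| < \rho$, contradicting $\|O-c\| = \rho$; this proves the observation. An equivalent route, if one prefers to avoid the convexity vocabulary, is to drop the perpendicular from $O$ onto the line through $r$ and $r'$, call $H$ its foot, and invoke Pythagoras: $\|O-x\|^2 = \|O-H\|^2 + \|H-x\|^2$ for every point $x$ of that line, so $\|O-x\|$ is strictly increasing in $\|H-x\|$; since $c$ is in the interior of $[r,r']$ one checks by a short case analysis on the position of $H$ that $\|H-c\| < \max\{\|H-r\|, \|H-r'\|\}$, and the same conclusion follows. I do not expect any genuine obstacle here: the only point requiring care is to make the distance comparison come out \emph{strict} rather than merely $\le$ --- it is precisely strictness that forbids $c$ from lying on the circumference while still allowing it to lie inside --- and this is secured by the two hypotheses $r \neq r'$ (making the quadratic genuinely strictly convex) and $t_0 \in (0,1)$ (placing $c$ in the open segment).
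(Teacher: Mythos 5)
Your argument is correct: the squared distance to the center is a strictly convex quadratic along the line through $r$ and $r'$, so at an interior point of $[r,r']$ it is strictly below the maximum of its values at $r$ and $r'$, hence strictly below $\rho^2$, which forbids $c$ from lying on the circumference. The paper offers no proof of this observation (it is asserted as an evident fact), so there is nothing to compare against; your write-up is a complete and correct justification, and you rightly isolate the two hypotheses ($r\neq r'$ and $c$ interior to the segment) that make the inequality strict.
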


\begin{definition}[Convex Hull]\cite{PreparataH77}
\label{def:conv}
Given a set $\mathcal{P}$ of $n\geq2$ points $p_1,p_2,\cdots,p_n$ on the plane, 
the convex hull of $\mathcal{P}$, denoted $H(\mathcal{P})$ , is the smallest polygon such that 
every point in $\mathcal{P}$ is either on an edge of $H(\mathcal{P})$ or inside it.
\end{definition}

Informally, it is the shape of a rubber-band stretched around $p_1,p_2,\cdots,p_n$. 
The convex hull is unique and can be computed with time complexity $O(n \log n)$~\cite{PreparataH77}.
When no ambiguity arises, $H(\mathcal{P})$ will be shortly denoted by $H$ and $H(\mathcal{P})\cap\mathcal{P}$ will indicate the set of the positions both on $H(\mathcal{P})$ and $\mathcal{P}$. 

From Definition~\ref{def:conv}, we deduce the following property :

\begin{property}
\label{obs:hull}
Let $\mathcal{P}$ be respectively a set of $n\geq2$ points that are not on the same line and let $H(\mathcal{P})$ be a convex hull. The two following properties are equivalent  
\begin{enumerate}
\item Any point $c$, not necessarily in $\mathcal{P}$, is located on $H$ (either on a vertice or an edge) 

\item there is a concave or a straight sector $\mathcal{S}$ in $\{\underline{rcr'},\overline{rcr'}\}$  such that $r$ and $r'$ are in $\mathcal{P}$ and there exists no point $\in \mathcal{P}\cap\mathcal{S}$.
\end{enumerate}
\end{property}

The relationship between the smallest enclosing circle and the convex hull is given by the following property  

\begin{property}\cite{Chrystal85}
\label{lem:sechull}
Given a set $\mathcal{P}$ of $n\geq2$ points on the plane. We have $$SEC(\mathcal{P})\cap\mathcal{P}\subseteq H(\mathcal{P})\cap\mathcal{P}$$.
\end{property}

\subsubsection{The Algorithm}

Based on the definitions and basic properties introduced above, we are now ready to present a deterministic self-stabilizing algorithm that allows $n$ robots ($n$ odd) to gather in a point, regardless of the initial positions of the robots on the plane. The idea of our algorithm is as follows : It consists in transforming an arbitrary configuration $\mathcal{P}$ into one where there is exactly one point $p_{max}\in Max\mathcal{P}$. When such a configuration is reached, all the robots which are not located at $p_{max}$ move towards $p_{max}$ avoiding to create another point $q$ than $p_{max}$ such that $|q|\geq p_{max}$. 

When $|Max\mathcal{P}|\ne 1$, we will distinguish two cases : $|Max\mathcal{P}|= 2$ and $|Max\mathcal{P}|\geq 3$.
  
If $Max\mathcal{P}=\{p_{max1};p_{max2}\}$, then each robot which is not located neither on $p_{max1}$ nor $p_{max2}$ moves towards its closest position $\in Max\mathcal{P}$ by avoiding to create an adding maximal point. Since the number of robots is odd, we have eventually either $|p_{max1}|>|p_{max2}|$ or $|p_{max1}|>|p_{max2}|$ and then, $|Max\mathcal{P}|=1$. 

For the case $|Max\mathcal{P}|\geq 3$, our strategy consists in trying to create a unique maximal point inside $SEC$. To reach such a configuration, we distinguish three subcases :
\begin{enumerate}
\item If there is no robot inside $SEC$, then all the robots are allowed to move towards the center of $SEC$.
\item If all the robots inside $SEC$ are located at the center of $SEC$, then only the robots located in $SEC\cap Max\mathcal{P}$ are allowed to move towards the center of $SEC$.
\item If some robots inside $SEC$ are not located at the center of $SEC$, then only the robots inside $SEC$ are allowed to move towards the center of $SEC$.
\end{enumerate}

The main algorithm is shown in Algorithm~\ref{algo:main}. In Algorithm~\ref{algo:main}, we use two subroutines :  $move\_to\_carefully(p)$ and $choose\_closest\_position(p_1,p_2)$. The former allows a robot $r$, located at $q$, to move towards $p$ only if there is no robot on the segment $[q,p]$ except the robots located on $p$ or the robots located on $q$. The latter one returns the closest position to $r$ among $\{p_1,p_2\}$. If the distance between $r$ and $p_1$ is equal to the distance between $r$ and $p_2$ then the function returns $p_1$. 

\begin{algo}[!htb]
\begin{tabbing}
xxxxx \= xxxxx \= xxxxx \= xxxxx \= xxxxx \= xxxxx \= xxxxx \= xxxxx \= xxxxx \= \kill 
\NA \kill \\
\> $\mathcal{P}:=$ the set of all the positions; \\
\> $Max\mathcal{P}:=$ the set of all the points $p\in\mathcal{P}$ such that $|p|$ is maximal; \\
\> \IF{$|Max\mathcal{P}|=1$} \\ 
\> \THEN  \> $p_{max}:=$ the unique point in $Max\mathcal{P}$;\\
\> \>  \IF{ I am not on $p_{max}$}; \\
\> \> \THEN $move\_to\_carefully(p_{max})$; \\
\> \> \ENDIF \\
\> \ENDIF \\
\> \IF{$|Max\mathcal{P}|=2$} \\
\>  \THEN  \> $p_{max1}:=$ the first point in $Max\mathcal{P}$; \\
\> \>   $p_{max2}:=$ the second point in $Max\mathcal{P}$; \\
\> \>  \IF{ I am not neither on $p_{max1}$ nor $p_{max2}$}  \\
\> \> \THEN $q:= choose\_closest\_position(p_{max1},p_{max2})$; \\
\> \> \> $move\_to\_carefully(q)$; \\
\> \> \ENDIF \\
\> \ENDIF \\
\> \IF{$|Max\mathcal{P}|\geq3$} \\
\> \THEN $SEC:=$ the smallest circle enclosing all the points in $\mathcal{P}$;\\
\> \> $c:=$ the center of $SEC$;\\
\> \> $Boundary:= SEC\cap\mathcal{P}$;\\
\> \> $Inside:= \mathcal{P} \setminus Boundary$;\\
\> \> \IF{ $Inside \ne \emptyset$ } \\
\> \> \THEN \IF { All the robots $\in Inside$ are located at $c$} \\
\> \> \> \THEN  \IF {I am in $(Boundary\cap Max\mathcal{P})$} \\
\> \> \> \> \THEN $move\_to(c)$; \\
\> \> \> \> \ENDIF \\
\> \> \> \ELSE \IF{ I am in $Inside$}\\
\> \> \> \> \THEN $move\_to(c)$; \\
\> \> \> \> \ENDIF \\
\> \> \> \ENDIF \\
\> \> \ELSE $move\_to(c)$; \\
\> \> \ENDIF \\
\> \ENDIF
\end{tabbing}
\caption{Gathering for an odd number of robots, executed by each robot. \label{algo:main}}
\end{algo} 

\subsubsection{Proof of closure}
\begin{lemma}[Closure]
\label{lem:clos}
According to Algorithm~\ref{algo:main}, if all the robots are located at the same position $p$, then all the robots are located at the same position thereafter. 
\end{lemma}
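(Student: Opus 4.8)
The plan is a short induction on time instants, exploiting that the gathered configuration is a fixed point of the first branch of Algorithm~\ref{algo:main}. First I would observe that when every robot occupies the same position $p$ at time $t$, the configuration is $\mathcal{P}(t)=\{p\}$ and $|p|=n$, so $p$ is the unique point of maximal multiplicity and $|Max\mathcal{P}(t)|=1$. Hence any robot activated during the step $(t,t+1)$ enters the branch ``$|Max\mathcal{P}|=1$'', sets $p_{max}:=p$, and then evaluates the guard ``I am not on $p_{max}$''; since the robot is located on $p=p_{max}$, the guard is false, so the robot makes no call to $move\_to\_carefully$ and, in particular, does not move. Robots not activated by the scheduler do not move either by the semantics of $SSM$. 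Therefore $\mathcal{P}(t+1)=\{p\}$.

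Then I would close the argument by induction on $t'\geq t$: the base case is the hypothesis $\mathcal{P}(t)=\{p\}$, and the inductive step is exactly the computation above, which turns $\mathcal{P}(t')=\{p\}$ into $\mathcal{P}(t'+1)=\{p\}$. This gives $\mathcal{P}(t')=\{p\}$ for all $t'\geq t$, which is the Closure property.

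The only facts that need an explicit word are (i) that the branches ``$|Max\mathcal{P}|=2$'' and ``$|Max\mathcal{P}|\geq 3$'' are never executed, which is immediate because $|Max\mathcal{P}(t)|=1$; and (ii) that in $SSM$ an inactive robot stays in place while the destination of an active robot depends only on the current configuration, so no robot can be carried away from $p$ by a residual move. There is no genuine obstacle here: the statement is a direct consequence of the robots already sitting on their own target $p_{max}=p$.
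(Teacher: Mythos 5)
Your proposal is correct and follows essentially the same reasoning as the paper's own proof: since $|Max\mathcal{P}|=1$ with $p_{max}=p$ and every robot already sits on $p_{max}$, the first branch leaves all robots idle, and the configuration is a fixed point. The only difference is that you make the induction over time instants explicit, which the paper leaves implicit in ``forever''.
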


\subsubsection{Proof of convergence}

\paragraph{Cases $|Max\mathcal{P}|=1$ and $|Max\mathcal{P}|=2$.}  

\begin{lemma}
\label{lem:cas1}
Let $\mathcal{P}$ be an arbitrary configuration for an odd number of $n$ robots. According to Algorithm~\ref{algo:main}, if $|Max\mathcal{P}|=1$ then all the robots are located at the same position in finite time.
\end{lemma}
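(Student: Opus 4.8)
The plan is to establish two facts. First, the condition $|Max\mathcal{P}|=1$ is an invariant of Algorithm~\ref{algo:main}, under which the unique maximal point $p_{max}$ never moves and $|p_{max}|$ never decreases. Second, under this invariant all $n$ robots reach $p_{max}$ in finite time, which is exactly the statement of the lemma.

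\emph{Invariance.} If $|Max\mathcal{P}(t)|=1$, then every activated robot executes the first branch, so during step $(t,t+1)$ the only moves are careful moves of robots not on $p_{max}$ towards $p_{max}$; robots on $p_{max}$ do not move, so $|p_{max}|$ cannot decrease. A robot leaving a cluster $q\neq p_{max}$ either reaches $p_{max}$ or stops at an interior point of the open segment $(q,p_{max})$, which, by the precondition of $move\_to\_carefully$, contained no robot; in particular no point occupied at time $t$ other than $p_{max}$ can gain a robot, so each old cluster $\neq p_{max}$ only shrinks. Moreover two distinct clusters $q,q'$ cannot feed the same new interior point: the open segments $(q,p_{max})$ and $(q',p_{max})$ are disjoint unless $q,q',p_{max}$ are colinear with $q,q'$ on the same side of $p_{max}$, in which case the cluster nearer to $p_{max}$ lies on the segment issued from the farther one and, being occupied, blocks its careful move. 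Hence every newly created point has multiplicity at most $|q|<|p_{max}|$ for a single source $q$, while $|p_{max}|$ only grows, so $p_{max}$ stays the unique maximal point. By induction, $|Max\mathcal{P}(t)|=1$ with the same $p_{max}$ for all $t$, and $|p_{max}(t)|$ is non-decreasing, hence eventually equals some constant $k\leq n$.

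\emph{Progress.} It suffices to show that whenever $k<n$ (that is, some robot is not on $p_{max}$), some robot reaches $p_{max}$ in finite time; since $|p_{max}|$ is non-decreasing, iterating this at most $n-1$ times yields $|p_{max}|=n$. Assume for contradiction that from some time on no robot ever reaches $p_{max}$; then the set $B$ of robots not on $p_{max}$ is fixed and nonempty, every $b\in B$ keeps $d(b,p_{max})>0$, and $d(b,p_{max})$ is non-increasing. Let $\sigma_{\min}:=\min_r \sigma_r>0$ and $\Phi(t):=\sum_{b\in B} d(b,p_{max})$; then $\Phi$ is non-increasing and non-negative. Whenever a robot $b\in B$ actually moves it does so via a clear-path careful move towards $p_{max}$; since $b$ does not reach $p_{max}$, its destination is at distance $>\sigma_b\geq\sigma_{\min}$, so it travels at least $\sigma_{\min}$ and $\Phi$ decreases by at least $\sigma_{\min}$ in that step. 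Consequently, if $\Phi$ were eventually constant, no robot of $B$ would ever move again; but then, taking $b^{*}\in B$ at minimal distance from $p_{max}$ (so the open segment $(b^{*},p_{max})$ is clear, as any robot on it would be a strictly closer element of $B$), fairness activates $b^{*}$, which then performs its careful move and does move — a contradiction. Hence $\Phi$ is not eventually constant, so it strictly decreases, each time by at least $\sigma_{\min}$, in infinitely many steps, which forces $\Phi<0$ eventually — absurd. Therefore a robot reaches $p_{max}$ in finite time, contradicting the maximality of $k<n$. (The oddness of $n$ is not used in this case.)

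\emph{Main obstacle.} The subtle part is twofold. In the invariance step one must rule out that simultaneous careful moves of several clusters pile up enough robots on a single point to tie $|p_{max}|$; this is precisely where the blocking rule of $move\_to\_carefully$ enters, through the disjointness of the radii joining clusters to $p_{max}$. In the progress step the difficulty is to beat an unbounded but only fair scheduler together with possible blocking: the combination of the monotone potential $\Phi$, the uniform lower bound $\sigma_{\min}$ on the length of any effective move, and the observation that the closest cluster always has a clear radius to $p_{max}$ (making a frozen $\Phi$ impossible) is what closes the argument; everything else is routine bookkeeping.
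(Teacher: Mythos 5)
Your proof is correct and follows essentially the same route as the paper's: you establish that $p_{max}$ remains the unique maximal point because careful moves keep robots from distinct positions distinct (except at $p_{max}$), and then conclude by fairness. You are considerably more rigorous than the paper on both halves — the disjoint-radii/blocking analysis for the invariant and the potential-function argument with the lower bound $\sigma_{\min}$ for progress are details the paper leaves implicit under ``by fairness'' — but these fill in the same argument rather than replace it.
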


\begin{lemma}
\label{lem:cas2}
Let $\mathcal{P}$ be an arbitrary configuration for an odd number of $n$ robots. According to Algorithm~\ref{algo:main}, if $|Max\mathcal{P}|=2$ then all the robots are located at the same position in finite time.
\end{lemma}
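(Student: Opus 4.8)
The plan is to show that, from any configuration with $|Max\mathcal{P}|=2$, the system reaches in finite time a configuration with $|Max\mathcal{P}|=1$, and then to invoke Lemma~\ref{lem:cas1}. Write $Max\mathcal{P}=\{p_{max1},p_{max2}\}$ and $k=|p_{max1}|=|p_{max2}|$. I would first record a parity fact: since $n$ is odd and exactly two points carry the maximal multiplicity $k$, the number of robots lying on neither $p_{max1}$ nor $p_{max2}$ equals $n-2k$, which is odd, hence at least one; incidentally this forces $k\ge 2$, since if $k=1$ every occupied point would be maximal and $n$ would equal $2$.

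Next comes the \emph{stability} step: I would prove that as long as $|Max\mathcal{P}|=2$, the points $p_{max1}$ and $p_{max2}$ remain the two points of maximal multiplicity, with multiplicities that can only grow. No robot on $p_{max1}$ or $p_{max2}$ ever moves, so these multiplicities never decrease; the crux is that no other point can reach the current maximal multiplicity. Let $B$ be the perpendicular bisector of $[p_{max1},p_{max2}]$, $H_1$ the closed half-plane bounded by $B$ and containing $p_{max1}$, and $H_2^{\circ}$ the open half-plane containing $p_{max2}$. By the definition of $choose\_closest\_position$, a robot bound for $p_{max1}$ lies in $H_1$ and, since $H_1$ is convex with $p_{max1}\in H_1$, travels within $H_1$; a robot bound for $p_{max2}$ lies in $H_2^{\circ}$ and travels within $H_2^{\circ}$; and $H_1\cap H_2^{\circ}=\emptyset$. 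Hence the two classes never meet and are preserved by the moves. Two robots bound for the same target can coincide only at a point shared by their two segments to that target other than the target itself, which forces them to be collinear with it; but then $move\_to\_carefully$ blocks the farther one, since the nearer one lies strictly inside its segment. The same guard prevents a moving robot from landing on an occupied non-target point. Consequently, in each step the only point whose multiplicity can increase is $p_{max1}$ or $p_{max2}$, so $\{p_{max1},p_{max2}\}$ stays the maximal set until a step in which one of the two receives strictly more newcomers than the other---and then $|Max\mathcal{P}|=1$.

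Then the \emph{progress} and conclusion steps. I would show at least one off-max robot reaches $p_{max1}$ or $p_{max2}$ in finite time: in the (nonempty) class bound for $p_{max1}$, take a robot $r^{\ast}$ nearest to $p_{max1}$; any robot strictly between $r^{\ast}$ and $p_{max1}$ would be a nearer off-max $p_{max1}$-bound robot, a contradiction, so $move\_to\_carefully$ lets $r^{\ast}$ advance. Since this class can only shrink (no robot joins it, by the partition argument) and each member's distance to $p_{max1}$ is non-increasing, the infimum of these distances drops by at least $\min_i \sigma_{r_i}>0$ each time a current nearest member is activated, hence reaches $0$ in finite time by fairness; symmetrically for $p_{max2}$. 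When off-max robots reach the two max points in a step, either the numbers arriving at $p_{max1}$ and at $p_{max2}$ differ, giving $|Max\mathcal{P}|=1$ at once, or they are equal and positive, in which case the off-max count drops by a positive even amount while remaining odd, hence positive; iterating the progress step, after finitely many such balanced events the off-max count is exactly $1$, and the last off-max robot, upon reaching its target, makes that point strictly overtake the other, so $|Max\mathcal{P}|=1$. In all cases $|Max\mathcal{P}|=1$ is reached in finite time, and Lemma~\ref{lem:cas1} concludes.

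The step I expect to be the main obstacle is the progress argument: ruling out a $move\_to\_carefully$ deadlock while coping with the bounded per-step displacement $\sigma_r$ and with the fact that the identity of the nearest off-max robot may change over time. The collision analysis underpinning the stability step---in particular the careful use of the perpendicular bisector and of collinearity with a target---is the other delicate point.
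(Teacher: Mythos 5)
Your proposal is correct and follows essentially the same route as the paper: show that $\{p_{max1},p_{max2}\}$ persists as the maximal set while off-max robots drain into it, then let the odd parity of $n$ force one of the two points to strictly overtake the other, and finish with Lemma~\ref{lem:cas1}. The main difference is one of rigor rather than strategy: the paper simply asserts that $move\_to\_carefully$ prevents distinct robots from merging anywhere but at the two maximal points, whereas you actually justify this via the perpendicular-bisector partition and the collinearity/blocking analysis, and you likewise make the fairness-based progress argument quantitative -- both welcome elaborations of steps the paper leaves implicit.
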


\paragraph{Case $|Max\mathcal{P}|\geq3$.}
In this paragraph, we prove that starting from a configuration where $|Max\mathcal{P}|\geq3$, all the robots are located at the same position in finite time. More precisely, we consider the case where there exists at least one robot inside $SEC(\mathcal{P}(t))$ ( refer to Lemma~\ref{tyty}) and the case where there is no robot inside $SEC(\mathcal{P}(t))$ ( refer to Lemma~\ref{lem:der}). 

In order to prove Lemma~\ref{tyty}, we use Lemmas~\ref{lem:tam} and~\ref{truc}. In particular, Lemmas~\ref{lem:tam} shows that, under specific conditions, the center of $SEC(\mathcal{P}(t))$ is inside $SEC(\mathcal{P}(t+1))$ even if $SEC(\mathcal{P}(t))\ne SEC(\mathcal{P}(t+1))$ or the center of $SEC(\mathcal{P}(t))$ is not the center of $SEC(\mathcal{P}(t+1))$. The proof of Lemma~\ref{lem:tam} is organized in two parts. In the former one, we consider the case where the center of $SEC$ is also on the convex hull (see Figure~\ref{coS}.c). In the latter one, we consider the case where the center of $SEC$ is not on the convex hull.

\begin{figure}[!htbp]
\begin{center}
  \begin{minipage}[t]{0.4\linewidth}
    \centering
    \epsfig{file=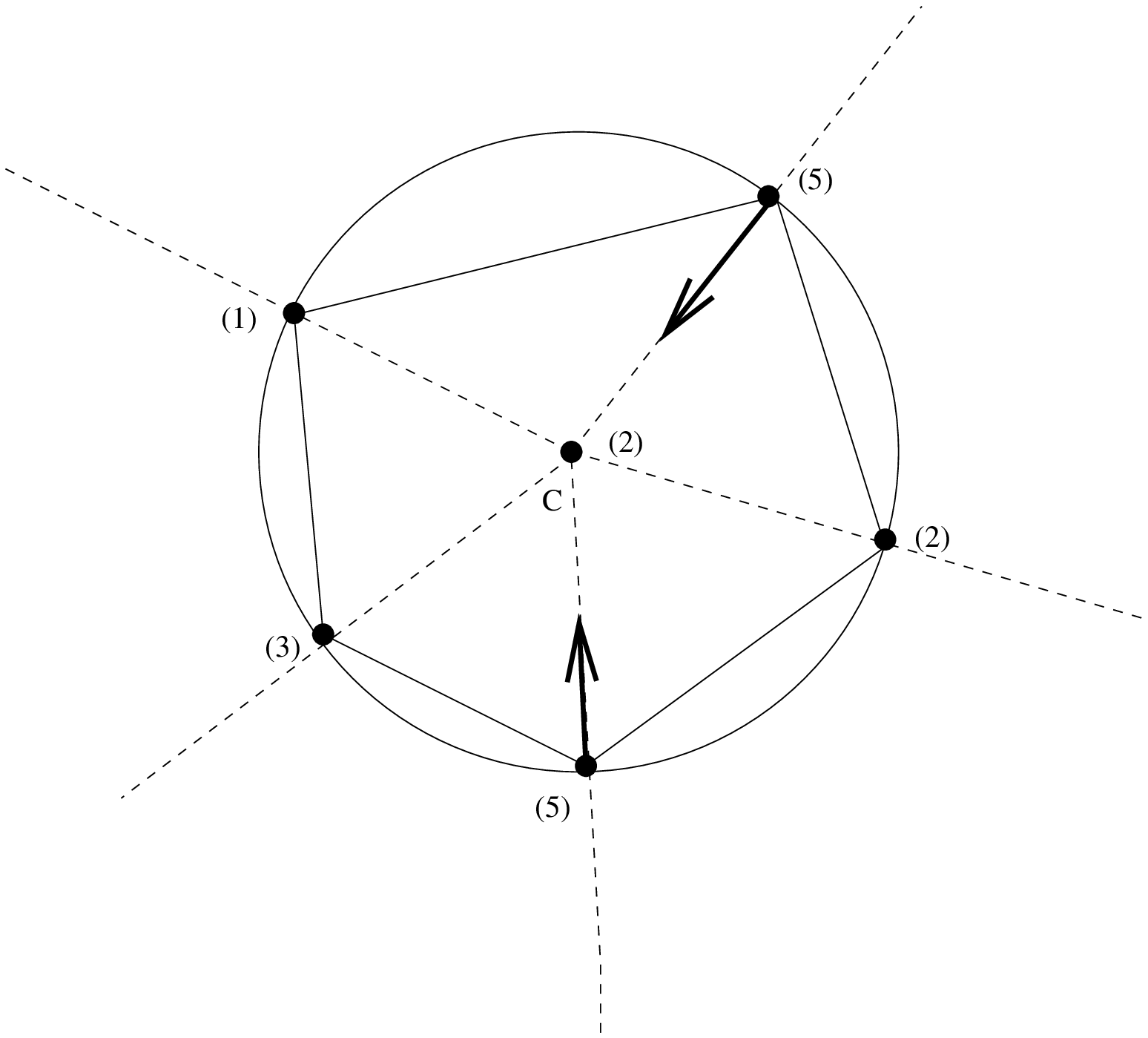, width=1.2\linewidth}\\
    {\footnotesize ($a$)}
  \end{minipage}
  \begin{minipage}[t]{0.4\linewidth}
    \centering
    \epsfig{file=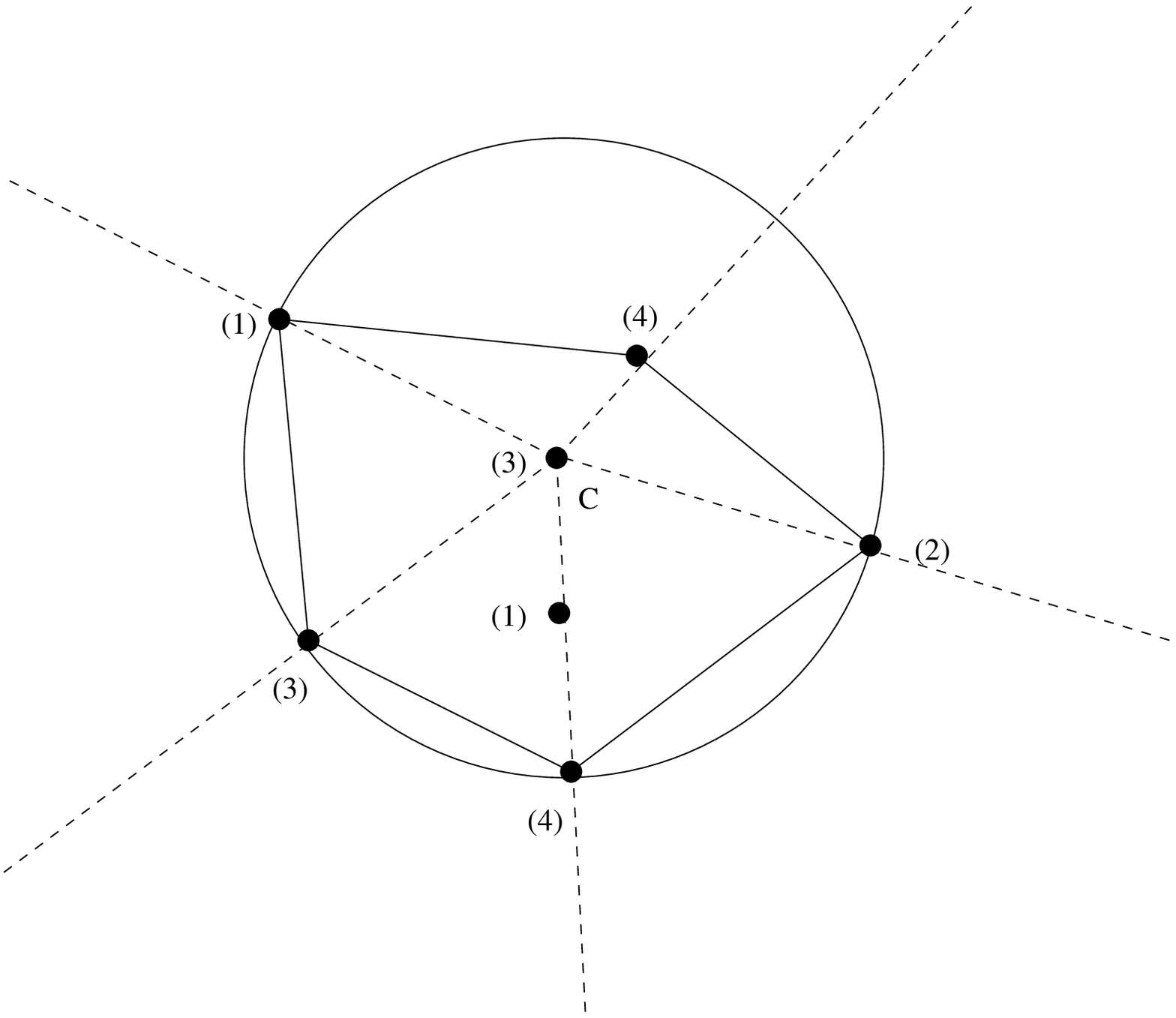, width=1.2\linewidth}\\
    {\footnotesize ($b$)}
  \end{minipage}
\end{center}
 \caption{The numbers between parenthesis indicate the multiplicity. In Figure~$a$, we have a configuration $\mathcal{P}(t)$ where the center $c$ of $SEC(\mathcal{P}(t))$ is inside the convex hull. Figure~$b$, we have configuration $\mathcal{P}(t+1)$ where some robots have moved toward $c$ and $c$ is inside the new convex hull.\label{coS}}
\end{figure}

\begin{lemma}
\label{lem:tam}
Let $\mathcal{P}(t)$ be a configuration such that $|Max\mathcal{P}|\geq3$ and there exists at least one robot inside $SEC(\mathcal{P}(t))$.

According to Algorithm~\ref{algo:main}, if both conditions are true :

\begin{enumerate}
\item some robots $\in\mathcal{P}(t)\cap SEC(\mathcal{P}(t))$ move in straight line toward the center $c$ of $SEC(t)$ and,  
\item for every $p$ $\in\mathcal{P}(t)\cap SEC(\mathcal{P}(t))$ there exists at least one robot in $p$ which does not reach $c$ at time $t+1$  
\end{enumerate}
then, the center of $SEC(\mathcal{P}(t))$ is inside $SEC(\mathcal{P}(t+1))$ at time $t+1$.
\end{lemma}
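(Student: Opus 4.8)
The plan is to show that $c$, the center of $SEC(\mathcal{P}(t))$, cannot escape $SEC(\mathcal{P}(t+1))$. I would argue by contradiction: suppose $c$ lies strictly outside $SEC(\mathcal{P}(t+1))$. The key observation is that every robot in $\mathcal{P}(t)$ that is \emph{not} on $SEC(\mathcal{P}(t))$ stays strictly inside $SEC(\mathcal{P}(t))$ at time $t+1$ (it moved towards $c$, which is interior, so by convexity of the disk it remains interior), and every robot that \emph{was} on $SEC(\mathcal{P}(t))$ either stayed on the circumference or moved towards $c$ along a radius, hence at time $t+1$ it lies on the closed segment $[c,p)$ for its original boundary point $p$ — in particular it lies in the closed disk bounded by $SEC(\mathcal{P}(t))$. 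So $\mathcal{P}(t+1)$ is contained in the closed disk $D$ of $SEC(\mathcal{P}(t))$, and $SEC(\mathcal{P}(t+1))$ is the smallest circle enclosing $\mathcal{P}(t+1)$.

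\textbf{Using the geometry of $SEC$.} Let $c'$ and $\rho'$ be the center and radius of $SEC(\mathcal{P}(t+1))$. If $c$ is strictly outside this circle, consider the two half-lines from $c'$ that are tangent-like bounds: more concretely, I would use Property~\ref{lem:sec} applied to the configuration $\mathcal{P}(t+1)$ with respect to its own center $c'$. Since $c$ is strictly outside $SEC(\mathcal{P}(t+1))$, the ray from $c'$ through $c$ hits the circle $SEC(\mathcal{P}(t+1))$ at a point, and there is an open concave sector with apex $c'$, symmetric about that ray, that contains no point of $\mathcal{P}(t+1)$ — this is exactly the statement that the points of $\mathcal{P}(t+1)$ all lie ``on the far side'' from $c$, which follows because $\mathcal{P}(t+1)\subseteq D$ and $c$ being outside $SEC(\mathcal{P}(t+1))$ forces $SEC(\mathcal{P}(t+1))$ to be pushed away from $c$. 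Then Property~\ref{lem:sec} gives a contradiction, since no such empty concave sector can exist for the center of a smallest enclosing circle. The delicate point is turning ``$c$ outside $SEC(\mathcal{P}(t+1))$'' plus ``$\mathcal{P}(t+1)\subseteq D$'' into the clean statement ``there is an empty concave sector at $c'$''; for this I would look at where the segment $[c,c']$ meets $SEC(\mathcal{P}(t+1))$ and argue via the two boundary-defining points of $SEC(\mathcal{P}(t+1))$ on the arc facing $c$.

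\textbf{Where the two hypotheses enter.} Hypothesis~1 (some boundary robots actually move towards $c$) is what makes $\mathcal{P}(t+1)$ strictly smaller in a useful direction, and Hypothesis~2 (for every boundary point $p$ of $\mathcal{P}(t)$, at least one robot at $p$ does \emph{not} reach $c$) is crucial: it guarantees that for each original boundary point $p$, some robot of $\mathcal{P}(t+1)$ still lies on the open segment between $c$ and $p$, so the ``spread'' of $\mathcal{P}(t+1)$ around $c$ is preserved — the points of $\mathcal{P}(t+1)$ are not all crammed into a small arc or collapsed near $c$. Without Hypothesis~2, all robots from some side could reach $c$, and then $SEC(\mathcal{P}(t+1))$ could genuinely sit off to one side with $c$ outside it. So the argument I expect is: the old boundary points $p_1,\dots,p_k$ of $\mathcal{P}(t)$ have, by Property~\ref{lem:sec} (applied at time $t$), no empty concave sector at $c$; Hypothesis~2 transfers a witness point on each $[c,p_i)$ into $\mathcal{P}(t+1)$; hence $\mathcal{P}(t+1)$ also has no empty concave sector at $c$, which is a well-known characterization (cf.\ Property~\ref{lem:sec} and Observation~\ref{toto}) equivalent to $c$ lying inside or on $SEC(\mathcal{P}(t+1))$ — and since Hypothesis~1 strictly contracts the configuration, $c$ lies strictly inside.

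\textbf{Main obstacle.} The technical heart, and the step I expect to be fussiest, is the converse direction of the ``empty concave sector'' characterization: proving that if a point $x$ has the property that no concave sector with apex $x$ is free of points of a finite set $Q$, then $x$ lies in the closed disk of $SEC(Q)$. Property~\ref{lem:sec} gives one direction (center of $SEC$ $\Rightarrow$ no empty concave sector), but here I need it for the center $c$ of the \emph{old} circle relative to the \emph{new} point set, so I must either invoke the contrapositive of Property~\ref{lem:sec} carefully or re-derive it: if $c$ were strictly outside $SEC(\mathcal{P}(t+1))$, exhibit the empty concave sector at $c$ directly from the separating-line geometry between $c$ and $SEC(\mathcal{P}(t+1))$, contradicting what Hypothesis~2 bought us. Getting the colinear/straight-sector boundary cases right (using Observation~\ref{toto}) is the remaining routine-but-delicate bookkeeping.
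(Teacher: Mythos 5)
Your third paragraph contains the same mechanism as the paper's proof: Hypothesis~2 leaves, for every old boundary point $p$, a witness of $\mathcal{P}(t+1)$ on the half-line $[c,p)$, so the fact that $c$ is ``angularly surrounded'' (no empty concave sector at $c$, by Property~\ref{lem:sec} applied at time $t$) is inherited by $\mathcal{P}(t+1)$, and one then converts ``surrounded'' into ``strictly inside $SEC(\mathcal{P}(t+1))$''. You also correctly flag where the difficulty sits. But the step you flag is a genuine gap in your write-up, and the concrete plan you offer for it does not work. In your second paragraph you try to derive a contradiction at the \emph{new} center $c'$, claiming that ``$c$ strictly outside $SEC(\mathcal{P}(t+1))$'' together with ``$\mathcal{P}(t+1)\subseteq D$'' yields an empty concave sector with apex $c'$. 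That implication is false in general: take $D$ large and $\mathcal{P}(t+1)$ a small triangle near the boundary of $D$ with $c'$ its circumcenter; then $c$ lies outside $SEC(\mathcal{P}(t+1))$ yet no concave sector at $c'$ is empty. (Such a configuration cannot arise under Hypothesis~2 --- but that is precisely why the argument must be run at $c$, not at $c'$.) Moreover, the implication you actually need at $c$, namely ``no empty concave or straight sector at $c$ with respect to $\mathcal{P}(t+1)$ $\Rightarrow$ $c$ strictly inside $SEC(\mathcal{P}(t+1))$'', is \emph{not} the contrapositive of Property~\ref{lem:sec}: that contrapositive only tells you $c$ is not the center of the new circle, which is far weaker.

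The paper closes exactly this gap without re-deriving anything: Property~\ref{obs:hull} is stated as an equivalence, so ``no empty concave or straight sector at $c$'' means $c$ is strictly inside the convex hull $H(\mathcal{P}(t+1))$, and by Property~\ref{lem:sechull} (the hull sits inside the closed $SEC$ disk) $c$ is then strictly inside $SEC(\mathcal{P}(t+1))$. The degenerate case in which the only empty sector at $c$ at time $t$ is a \emph{straight} one ($c$ on the hull, on a segment $[x,y]$ between two antipodal boundary points) is handled separately: Hypothesis~2 keeps $c$ strictly between two points of $\mathcal{P}(t+1)$, and Observation~\ref{toto} forbids $c$ from lying on the circumference of any circle enclosing them. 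You gesture at Observation~\ref{toto} but do not carry this case out. Finally, your attribution of strictness to Hypothesis~1 (``Hypothesis~1 strictly contracts the configuration, so $c$ lies strictly inside'') is misplaced: strictness comes from the hull/segment arguments above, not from the fact that some robots actually move. So: right skeleton, but the decisive implication is left unproved and the substitute argument you sketch for it is incorrect.
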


\begin{proof}
Let $c$ be the center of $SEC(t)$ at time $t$. We consider two cases, depending on whether $c$ is on the convex hull $H(\mathcal{P}(t))$ or not, at time $t$.
\begin{itemize}

\item {\bf $c$ is on $H(\mathcal{P}(t))$ at time $t$.} From Property~\ref{obs:hull}, there exists a concave or a straight sector $\mathcal{S}$ in $\{\underline{xcy},\overline{xcy}\}$  such that $x$ and $y$ are in $\mathcal{P}(t)$ and there is no point $\in \mathcal{P}(t)\cap\mathcal{S}$. However, from Property~\ref{lem:sec}, we know that there do not exist two points $x$ and $y$ in $\mathcal{P}(t)$ such that there exists a concave sector $\mathcal{S}$ in $\{\underline{xcy},\overline{xcy}\}$  and $\mathcal{P}(t)\cap\mathcal{S}=\emptyset$. So, 
there exists {\bf only} a straight sector $\mathcal{S}$ in $\{\underline{xcy},\overline{xcy}\}$  such that $x$ and $y$ are in $\mathcal{P}(t)$ and there is no point $\in \mathcal{P}(t)\cap\mathcal{S}$. Consequently, $c$ is on the segment $[x,y]$ at time $t$. Since the robots move in straight line towards $c$ and since there exist some robots located at $x$ and some robot located at $y$ which do not reach $c$ at time $t+1$ then, $c$ is on the segment $[r,s]$ at time $t+1$ with $r$ and $s$ $\in\mathcal{P}(t+1)$.
From Observation~\ref{toto}, we deduce that $c$ is inside $SEC(\mathcal{P}(t+1))$ at time $t+1$.

\item {\bf $c$ is not on $H(\mathcal{P}(t))$ at time $t$.} In this case, all the points in $\mathcal{P}(t)$ are not on the same line otherwise $c$ would have been on  $H(\mathcal{P}(t))$.  So, from Property~\ref{obs:hull} we know that there does not exist a concave or a straight sector $\mathcal{S}$ in $\{\underline{xcy},\overline{xcy}\}$  such that $x$ and $y$ are in $\mathcal{P}(t)$ and there is no point $\in \mathcal{P}(t)\cap\mathcal{S}$. Since the robots move in straight line towards $c$ and since for each point $p\in \mathcal{P}(t)$ there exists at least one robot located on $p$ which does not reach $c$ at time $t+1$ then, we deduce that there does not exist a concave or a straight sector $\mathcal{S}$ in $\{\underline{rcs},\overline{rcs}\}$  such that $r$ and $s$ are in $\mathcal{P}(t+1)$ and there is no point $\in \mathcal{P}(t+1)\cap\mathcal{S}$ (Figures~\ref{coS}.$a$ and~\ref{coS}.$b$ illustrate this fact). So, from Property~\ref{obs:hull} $c$ is inside $H(\mathcal{P}(t+1))$ at time $t+1$, and from Lemma~\ref{lem:sechull} we deduce that $c$ is inside $SEC(\mathcal{P}(t+1))$.
\end{itemize}

\end{proof}

\begin{lemma}
\label{truc}
Let $\mathcal{P}(t)$ be a configuration such that $|Max\mathcal{P}|\geq3$ and there exists at least one robot inside $SEC(\mathcal{P}(t))$.
If any robot $r$ is inside $SEC(\mathcal{P}(t))$ and $r$ is located on the boundary of $SEC(\mathcal{P}(t+1))$ then $|Max\mathcal{P}(t+1)|\leq2$.
\end{lemma}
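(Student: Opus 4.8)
The plan is to analyze what must happen for a robot that starts strictly inside $SEC(\mathcal{P}(t))$ to end up on the circumference of $SEC(\mathcal{P}(t+1))$, and to show this forces the configuration at $t+1$ to have almost all robots concentrated in very few points. First I would invoke Lemma~\ref{lem:triv}: if, going from $t$ to $t+1$, every point of $\mathcal{P}(t)\cap SEC(\mathcal{P}(t))$ were vacated (all robots on the boundary pushed strictly inside), then $\mathcal{R}ad(SEC(\mathcal{P}(t+1)))<\mathcal{R}ad(SEC(\mathcal{P}(t)))$ and in particular $SEC(\mathcal{P}(t+1))$ lies strictly inside $SEC(\mathcal{P}(t))$; but a robot $r$ inside $SEC(\mathcal{P}(t))$ need not end up inside this smaller circle, so this alone is not enough. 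The key observation is to use the structure of Algorithm~\ref{algo:main} in the case $|Max\mathcal{P}(t)|\geq 3$: the only robots that move are robots moving \emph{toward} the center $c$ of $SEC(\mathcal{P}(t))$ (in straight line), and in particular \emph{no robot ever moves outward}. So every robot of $\mathcal{P}(t+1)$ lies in the closed disk bounded by $SEC(\mathcal{P}(t))$, which means $SEC(\mathcal{P}(t+1))$ is contained in that closed disk. A robot $r$ that was strictly inside $SEC(\mathcal{P}(t))$ and is on the boundary of $SEC(\mathcal{P}(t+1))$ therefore forces the circle $SEC(\mathcal{P}(t+1))$ to be strictly smaller in an essential sense: it cannot touch $SEC(\mathcal{P}(t))$ at $r$'s old position since $r$ only moved inward (toward $c$), so actually $r$ moved, meaning $r$ was a robot that was allowed to move, i.e.\ $r$ was among the robots sent to $c$, hence $r$'s new position is on the segment from its old position to $c$.

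Next I would pin down where the robots of $\mathcal{P}(t)\cap SEC(\mathcal{P}(t))$ go. In each of the three subcases of the $|Max\mathcal{P}|\geq 3$ branch, the boundary robots either all move to $c$ (subcases where $Inside=\emptyset$, or where inside robots are all at $c$ and boundary robots in $Max\mathcal{P}$ move to $c$) or none of them move (the subcase where some inside robot is off-center). Combining this with the fact that $r$ genuinely moved (so $r$ was inside and got activated to go to $c$), we are in the third subcase: only robots strictly inside $SEC(\mathcal{P}(t))$ move, and they all move toward $c$; the boundary robots stay put. Then $SEC(\mathcal{P}(t+1))$ still has all the old boundary points of $\mathcal{P}(t)$ on or inside it, and these points already determine $SEC(\mathcal{P}(t))$; since nothing moved outward, $SEC(\mathcal{P}(t+1))=SEC(\mathcal{P}(t))$ unless the defining points changed — but they didn't move — so in fact $SEC(\mathcal{P}(t+1))=SEC(\mathcal{P}(t))$, and then $r$ being on its boundary contradicts $r$ moving strictly toward the interior point $c$. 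The only way to escape the contradiction is that $r$ did \emph{not} move, i.e.\ $r$ was already on $SEC(\mathcal{P}(t))$ — contradicting the hypothesis — \emph{unless} moving $r$ inward shrinks $SEC$, which happens precisely when $r$'s old position was one of the (two or three) points defining $SEC(\mathcal{P}(t))$; but $r$ is strictly inside, so it is not a defining point. Hence actually the only consistent scenario is the one where boundary robots also move (the $Inside=\emptyset$-type subcases), the radius strictly shrinks by Lemma~\ref{lem:triv}, and then for $r$ to reach the new, smaller boundary, essentially all robots must have collapsed toward $c$, leaving at most a couple of occupied points and hence $|Max\mathcal{P}(t+1)|\leq 2$.

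More carefully, the clean route is: since every robot moves toward $c$ or stays, $SEC(\mathcal{P}(t+1))$ is contained in the closed disk of $SEC(\mathcal{P}(t))$. If $r$ (strictly inside $SEC(\mathcal{P}(t))$) lies on $SEC(\mathcal{P}(t+1))$, then $SEC(\mathcal{P}(t+1))$ is a circle inside the disk of $SEC(\mathcal{P}(t))$ passing through the interior point $r$; by the maximality characterization of the smallest enclosing circle (Definition~\ref{def:sec}), $SEC(\mathcal{P}(t+1))$ is determined either by two diametrically opposite robots of $\mathcal{P}(t+1)$ or by at least three. I would argue that all such defining robots must coincide with $r$'s position or lie on a single short arc near $r$, because any robot that remained on $SEC(\mathcal{P}(t))$ (did not move) would lie outside or on the boundary of any circle through the strictly-interior point $r$ that is small enough to have $r$ on its boundary, unless that robot's position is engulfed — forcing that no robot stayed on $SEC(\mathcal{P}(t))$. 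So all boundary robots of $\mathcal{P}(t)$ moved, i.e.\ all moved to $c$; then $\mathcal{P}(t+1)$ consists of $c$ (with high multiplicity, at least all former boundary robots) plus possibly the former interior robots, which also moved toward $c$ or stayed. Counting multiplicities: the former boundary set had at least $3$ occupied points by Property~\ref{lem:sechull} (since $|Max\mathcal{P}|\geq 3$ and $SEC\cap\mathcal{P}\subseteq H\cap\mathcal{P}$), and possibly more robots; once all of them are at $c$, the point $c$ dominates. The remaining interior robots are few; at most one other occupied point can tie with $c$, and in fact since $r$ itself is now at a boundary point distinct from $c$ (as $r$ moved strictly toward $c$ but did not reach it — it is on the new boundary, $c$ is the center), $\mathcal{P}(t+1)$ has essentially two occupied "outer" configurations, giving $|Max\mathcal{P}(t+1)|\leq 2$.

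The main obstacle I expect is the bookkeeping in the last step: turning ``all boundary robots of $\mathcal{P}(t)$ collapsed to $c$'' into the precise inequality $|Max\mathcal{P}(t+1)|\leq 2$, which requires controlling the multiplicities of the possibly-many interior robots of $\mathcal{P}(t)$ and ruling out the creation of three or more maximal points. The cleanest way around it is probably to observe that after the collapse the center $c$ carries at least as many robots as were on the old boundary (at least $3$ occupied positions' worth, hence a large count), while every other occupied point of $\mathcal{P}(t+1)$ is either $r$'s new location or the image of an interior robot that did not move, and to show no two of these ``other'' points, together with any third, can all be in $Max\mathcal{P}(t+1)$ because $c$ strictly dominates them — leaving at most the pair $\{c, \text{one other}\}$, or a pair of non-$c$ points if $c$ were somehow not maximal, in every case $|Max\mathcal{P}(t+1)|\leq 2$. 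I would double-check the degenerate sub-case where $r$ reaches $c$ exactly (then $r$'s new position is $c$, which is the center of $SEC(\mathcal{P}(t+1))$ and cannot be on its circumference unless the circle is degenerate, i.e.\ all robots coincide, giving $|Max\mathcal{P}(t+1)|=1$), which is consistent with the claim.
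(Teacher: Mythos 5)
There is a genuine gap, and it starts with a false premise. You assume that $r$ ``genuinely moved''; nothing in the hypothesis forces this, and in the only scenario Algorithm~\ref{algo:main} actually permits, $r$ does \emph{not} move. The correct chain is: since every robot either stays or moves toward the centre $c$ of $SEC(\mathcal{P}(t))$, a robot $r$ strictly inside $SEC(\mathcal{P}(t))$ can only appear on the boundary of $SEC(\mathcal{P}(t+1))$ if the circle itself changed, which requires some robot of $Boundary$ to have moved; with $Inside\neq\emptyset$ (the lemma's hypothesis), the algorithm lets boundary robots move only in the subcase where \emph{all} robots of $Inside$ sit at $c$. Hence $r$ is at $c$, stays at $c$, and the hypothesis becomes ``$c$ is on the boundary of $SEC(\mathcal{P}(t+1))$''. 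Your case analysis instead lands in the subcase where only interior robots move (deriving a contradiction there), then concludes that ``$r$ did not move, i.e.\ $r$ was already on $SEC(\mathcal{P}(t))$'' --- a non sequitur, since an interior robot may simply be inactive or parked at $c$ --- and finally drifts into ``$Inside=\emptyset$-type subcases'' that the hypothesis excludes. Your geometric route in the third paragraph also breaks: a stayed boundary robot need not be expelled by the new circle, because $SEC(\mathcal{P}(t+1))$ can be internally tangent to $SEC(\mathcal{P}(t))$ and pass through both the stayed robot and $r$ (unit circle, $r$ at $(0.9,0)$, stayed robot at $(-1,0)$, new circle of radius $0.95$ centred at $(-0.05,0)$). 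So ``no boundary robot stayed'' does not follow, and even if it did, ``all boundary robots moved'' does not give ``all boundary robots reached $c$'': moves are truncated at $\sigma_r$ and only activated robots of $Boundary\cap Max\mathcal{P}$ move at all. The final count to $|Max\mathcal{P}(t+1)|\leq 2$ rests entirely on that unproven collapse.

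The missing idea is the appeal to Lemma~\ref{lem:tam}. Once you know that $c$ is on the boundary of $SEC(\mathcal{P}(t+1))$ while some boundary robots moved toward $c$ (condition~1 of that lemma), the contrapositive of Lemma~\ref{lem:tam} says condition~2 fails: there is a point $p\in\mathcal{P}(t)\cap SEC(\mathcal{P}(t))$ \emph{all} of whose robots reached $c$. Since only robots of $Max\mathcal{P}(t)$ on the boundary may move, $|p|$ is the maximal multiplicity at time $t$; adding the nonempty set of interior robots already at $c$ gives $|c|>|p|$ at time $t+1$, while no other point can exceed $|p|$ because robots from distinct positions moving in straight lines toward $c$ can only meet at $c$. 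Hence $Max\mathcal{P}(t+1)=\{c\}$ and $|Max\mathcal{P}(t+1)|=1$, contradicting the assumed $|Max\mathcal{P}(t+1)|>2$. Your closing paragraph gestures at a count of this kind, but without Lemma~\ref{lem:tam} you have no handle on which boundary point is fully absorbed into $c$, so the argument does not close.
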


\begin{proof}
By contradiction assume that $r$ is inside $SEC(\mathcal{P}(t))$ and $r$ is located on the boundary of $SEC(\mathcal{P}(t+1))$ and $|Max\mathcal{P}(t+1)|>2$. Let $c$ be the center of $SEC(\mathcal{P}(t))$ at time $t$. From assumption, some robots on the boundary of $SEC(\mathcal{P}(t))$ have moved toward the center of $SEC(\mathcal{P}(t))$. According to Algorithm~\ref{algo:main}, that implies that all the robots inside $SEC(\mathcal{P}(t))$, notably $r$, are located at the center of $SEC(\mathcal{P}(t))$ at time $t$. So, $c$ is on the boundary of $SEC(\mathcal{P}(t+1))$. From Lemma~\ref{lem:tam}, we deduce that  there exists a point $p$ $\in\mathcal{P}(t)\cap SEC(\mathcal{P}(t))$ such that all the robots in $p$ have reached $c$ at time $t+1$. However, according to Algorithm~\ref{algo:main} only the robots located in $\in Max\mathcal{P}(t)\cap SEC(\mathcal{P}(t))$ are allowed to move at time $t$. Therefore, for every point $p\ne c$ we have $|c|>|p|$ at time $t+1$. Hence, $|Max\mathcal{P}(t+1)|=\{c\}$ i.e., $|Max\mathcal{P}(t+1)|=1$. A contradiction.
\end{proof}

\begin{lemma}
\label{tyty}
Let $\mathcal{P}(t)$ a configuration such that $|Max\mathcal{P}|\geq3$ and there exists at least one robot inside $SEC(\mathcal{P}(t))$.
According to Algorithm~\ref{algo:main}, all the robots are located at the same position in finite time.
\end{lemma}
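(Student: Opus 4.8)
The plan is to track the evolution of the configuration starting from $|Max\mathcal{P}(t)|\geq 3$ with at least one robot inside $SEC$, and show that a potential function---essentially the radius $\mathcal{R}ad(SEC)$, augmented by a count of how many robots sit strictly inside $SEC$ and whether they all coincide with the center---decreases until we reach a configuration with $|Max\mathcal{P}|\leq 2$, at which point Lemmas~\ref{lem:cas1} and~\ref{lem:cas2} finish the job. First I would partition the ``$|Max\mathcal{P}|\geq 3$ with a robot inside $SEC$'' situation according to the three subcases of the algorithm: (a) all inside robots are at $c$, so only $Boundary\cap Max\mathcal{P}$ robots move to $c$; (b) some inside robot is off-center, so exactly the inside robots move to $c$. (The third algorithmic subcase, no robot inside, is deferred to Lemma~\ref{lem:der}.) In each subcase I must check two things: the move is well-defined and the robots genuinely travel in a straight line toward $c$ (no $move\_to\_carefully$ blocking here, since the algorithm uses plain $move\_to$ in the $|Max\mathcal{P}|\geq 3$ branch), and then argue progress.

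For subcase (b), the inside robots all head to $c$. Here I would invoke Lemma~\ref{lem:tam}: if at least one robot on every boundary point fails to reach $c$ in one step---which may fail only because some boundary point has all its robots moving, but boundary robots do \emph{not} move in subcase (b), so condition~2 of Lemma~\ref{lem:tam} is automatic, and condition~1 is about boundary robots too, hence vacuous---wait, I need to be careful: Lemma~\ref{lem:tam} talks about robots on $SEC$ moving toward $c$. In subcase (b) no boundary robot moves, so $SEC(\mathcal{P}(t+1))\subseteq SEC(\mathcal{P}(t))$ trivially (the boundary is unchanged, interior robots stay interior or hit $c$), and in fact by Lemma~\ref{truc} no interior robot can land on the new boundary unless we were already in subcase (a); so $SEC$ is unchanged, the boundary population is unchanged, and the number of distinct interior positions (or the total ``spread'' of interior robots toward $c$) strictly decreases by fairness---eventually all interior robots reach $c$, putting us in subcase (a). For subcase (a), the $Boundary\cap Max\mathcal{P}$ robots move to $c$; by Lemma~\ref{lem:tam} (now condition~1 holds for those boundary points, and condition~2 holds iff not every robot on some such point reaches $c$) either $c$ stays inside $SEC(\mathcal{P}(t+1))$ and by Lemma~\ref{lem:triv}-style reasoning $\mathcal{R}ad(SEC)$ has strictly decreased, or some boundary $Max$-point empties entirely into $c$, in which case the argument in the proof of Lemma~\ref{truc} shows $|c|$ strictly exceeds every other multiplicity, so $|Max\mathcal{P}(t+1)|=1$ and we are done via Lemma~\ref{lem:cas1}.

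The main obstacle is ruling out bad intermediate configurations and the scheduler's adversarial choices: because the scheduler is only fair (not synchronous), robots on a $Max$-boundary-point may be activated one at a time, so I must argue that no matter which subset moves, either we make monotone progress or we reach $|Max\mathcal{P}|\leq 2$ or $=1$. The delicate point is that a partial move of boundary robots could momentarily change which points are in $Max\mathcal{P}$, or change $SEC$ itself; I would handle this by a careful case analysis showing that at each step one of the following strictly improves: (i) $|Max\mathcal{P}|$ drops to $\leq 2$; (ii) $\mathcal{R}ad(SEC)$ strictly decreases (using Lemma~\ref{lem:triv} together with Lemma~\ref{lem:tam}'s guarantee that $c$ remains enclosed, so $SEC$ genuinely shrinks whenever a boundary point is pushed strictly inward); or (iii) $SEC$ and $Boundary$ are unchanged but the interior robots move strictly closer to $c$ in aggregate. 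Since (ii) cannot happen infinitely often (radii would go to $0$, but $\sigma_r$-bounded steps and the discrete ``some robot doesn't reach $c$'' events prevent convergence to a non-gathered limit---here one must also note that if $\mathcal{R}ad(SEC)$ shrinks below the relevant $\sigma_r$ thresholds, a single activation completes the move), (iii) cannot happen infinitely often without reaching subcase (a), and then (i) or termination follows, the configuration reaches $|Max\mathcal{P}|\leq 2$ or $|Max\mathcal{P}|=1$ in finite time; Lemmas~\ref{lem:cas1} and~\ref{lem:cas2} then give gathering.
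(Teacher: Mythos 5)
Your proposal follows essentially the same route as the paper's proof: both rest on Lemma~\ref{truc} (interior robots remain interior while $|Max\mathcal{P}|\geq 3$), Lemma~\ref{lem:tam}, and a progress argument combining Lemma~\ref{lem:triv}, fairness, and the minimum step distance $\sigma_r$ to force accumulation at the center until $|Max\mathcal{P}|\leq 2$, after which Lemmas~\ref{lem:cas1} and~\ref{lem:cas2} conclude. Your version is more explicit about the algorithm's subcases and the scheduler's adversarial choices, but the underlying potential (radius of $SEC$ together with accumulation at $c$) and the somewhat informal final convergence step are the same as in the paper's proof by contradiction.
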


\begin{proof}
Assume by contradiction $|Max\mathcal{P}|\geq3$ forever. From Lemma~\ref{truc}, we know that the robots inside $SEC(\mathcal{P}(t))$ are inside $SEC(\mathcal{P}(t+1))$. So, by induction we deduce that the robots inside $SEC(\mathcal{P}(t))$ are inside $SEC(\mathcal{P}(t_i))$ for all $t_i\geq t$. From Lemma~\ref{lem:triv}, fairness and because of the fact that each robot $r$ can move to at least a constant distance $\sigma_r>0$ in one step, we know that there exists a time instant $t_k$ where the number of robots at the center of $SEC(\mathcal{P}(t_k))$ will be greater than the number of robots not located at the center of $SEC(\mathcal{P}(t_k))$. So, $|Max\mathcal{P}|=1$ : contradiction. So, $|Max\mathcal{P}|\leq2$ in finite time and from Lemmas~\ref{lem:cas1} and~\ref{lem:cas2} all the robots are located at the same position in finite time.
\end{proof}

\begin{lemma}
\label{lem:der}
Let $\mathcal{P}(t)$ be a configuration such that $|Max\mathcal{P}|\geq3$ and there exists no robot inside $SEC(\mathcal{P}(t))$.
According to Algorithm~\ref{algo:main}, all the robots are located at the same position in finite time.
\end{lemma}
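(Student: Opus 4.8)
The plan is to reduce this case to the previously-handled cases $|Max\mathcal{P}|\leq 2$ and the case ``$|Max\mathcal{P}|\geq 3$ with at least one robot strictly inside $SEC$'' (Lemma~\ref{tyty}), by showing that from a configuration where $|Max\mathcal{P}|\geq 3$ and no robot lies strictly inside $SEC(\mathcal{P}(t))$, the system is forced, in a finite number of steps, into one of those already-resolved situations. According to Algorithm~\ref{algo:main}, when $Inside=\emptyset$ every activated robot executes $move\_to(c)$, i.e.\ moves in a straight line toward the center $c$ of $SEC(\mathcal{P}(t))$; note that here, unlike in the ``carefully'' movements, there is no restriction, so every robot that the fair scheduler activates actually makes progress of at least its step bound $\sigma_r>0$ toward $c$. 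The first thing I would verify is the easy structural fact: as long as we remain in the regime ``$|Max\mathcal{P}|\geq 3$, $Inside=\emptyset$'', the radius of $SEC$ is non-increasing and, whenever the scheduler activates a robot that is not already at $c$, it strictly decreases. This is exactly Lemma~\ref{lem:triv} applied to the one-step transition, since all moving robots come from $\mathcal{P}(t)\cap SEC(\mathcal{P}(t))$ and are pushed strictly inside (they move toward the center and, unless they reach $c$, they end strictly inside $SEC(\mathcal{P}(t))$, which still encloses $\mathcal{P}(t+1)$).

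Next I would run the usual fairness-plus-$\sigma_r$ argument. Assume for contradiction that we stay in the regime ``$|Max\mathcal{P}|\geq 3$, $Inside=\emptyset$'' forever. By fairness every robot is activated infinitely often; each activation of a robot not at $c$ shrinks $\mathcal{R}ad(SEC)$ by a positive amount, and the radius is bounded below by $0$, so after finitely many steps no activated robot can be anywhere other than at the current center — but a robot moving toward $c$ by at least $\sigma_r$ cannot ``land exactly on $c$'' forever unless it was already there, and the center itself moves whenever the non-center robots are not symmetric, so this forces all robots to coalesce at a single point, contradicting $|Max\mathcal{P}|\geq 3$. The clean way to package this is: either (a) at some step the configuration leaves the regime because a robot ends strictly inside $SEC$ — then $Inside\neq\emptyset$ and we invoke Lemma~\ref{tyty}; or (b) the configuration leaves the regime because $|Max\mathcal{P}|$ drops to $1$ or $2$ — then we invoke Lemma~\ref{lem:cas1} or Lemma~\ref{lem:cas2}; or (c) neither ever happens, in which case $\mathcal{R}ad(SEC)\to 0$ while $|Max\mathcal{P}|\geq 3$, which is impossible since $|Max\mathcal{P}|\geq 3$ requires at least three distinct occupied points on a circle of the current (positive, and bounded-below-by-a-positive-constant-once-it-stops-shrinking) radius. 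Actually the sharpest phrasing avoids a limit argument: each ``effective'' step decreases $\mathcal{R}ad(SEC)$ by at least $\min_r \sigma_r$ times a geometric factor depending only on the current angular configuration, but since we never need a rate, it suffices to note the radius is strictly decreasing along an infinite subsequence of steps that each move some robot a bounded distance, forcing the radius below any threshold, hence eventually below, say, half the smallest pairwise distance present at some earlier time — a contradiction with $|Max\mathcal{P}|\geq 3$, since three distinct points on $SEC$ are pairwise at distance at least $\mathcal{R}ad(SEC)$ (any two of three points on a circle, one pair subtends an angle $\le 120^\circ$... ) — this last geometric bound I would state carefully but it is elementary.

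The main obstacle, and the place I would spend the most care, is ruling out a pathological ``stuck'' execution in regime (c): one must argue that the regime ``$|Max\mathcal{P}|\geq 3$, $Inside=\emptyset$'' cannot persist forever with the radius bounded away from $0$. The subtlety is that the scheduler could keep activating only robots that happen to already be at the center $c$ (doing nothing), or activate robots in a way that keeps reproducing a configuration with all robots on a circle and three maximal points. Fairness kills the first possibility — every robot, including those on the boundary, is eventually activated, and a boundary robot (which is never at $c$, since $c$ is the center of $SEC$ and boundary robots are at distance $\mathcal{R}ad(SEC)>0$ from it) strictly shrinks the radius when it moves; and since a robot's computed destination is $c$ and it travels at least $\sigma_r$ toward it (or arrives), after finitely many of its own activations it is strictly inside the old $SEC$, so either we have left the regime (cases (a)/(b)) or the radius has genuinely decreased. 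Iterating, the radius must drop below the floor forced by having $\geq 3$ occupied boundary points, giving the contradiction. I would also double-check the boundary case of Lemma~\ref{lem:triv}'s hypothesis — that ``pushing inside all points of $\mathcal{P}\cap SEC$'' indeed applies even when $\mathcal{P}\cap SEC=\mathcal{P}$ (all robots on the circle), which is precisely the configuration we start from here — but that is exactly what Lemma~\ref{lem:triv} is stated for, so no new work is needed there.
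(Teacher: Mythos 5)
Your proposal follows essentially the same route as the paper: the same three-way case split (at some time a robot lies strictly inside $SEC$ while $|Max\mathcal{P}|\geq3$, hence Lemma~\ref{tyty}; $|Max\mathcal{P}|$ drops to at most $2$, hence Lemmas~\ref{lem:cas1} and~\ref{lem:cas2}; or all robots remain on the boundary forever, which fairness and the $\sigma_r$ bound rule out by shrinking the radius until the robots coalesce). Your treatment of the last case is more detailed than the paper's one-line dismissal; just note that Lemma~\ref{lem:triv} as stated requires \emph{all} boundary points to be pushed inside, so a step in which only some boundary robots move either strictly shrinks $SEC$ or leaves a mover strictly inside an unchanged $SEC$ --- but the latter is exactly your case (a), so the argument goes through.
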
 

\begin{proof}
 According to Algorithm~\ref{algo:main}, all the robots may decide to move toward the center of $SEC$. Since each robot $r$ can move to at least a constant distance $\sigma_r>0$ in one step, if all the robots are always on the boundary of $SEC(\mathcal{P})$ then, by fairness, the gathering problem is solved in finite time.
Otherwise, 
\begin{itemize}
\item either there exists $t_k>t$ such that $|Max\mathcal{P}(t_k)|\geq3$ and there exists at least one robot inside $SEC(\mathcal{P}(t))$ : From Lemma~\ref{tyty}, we deduce that all the robots are located at the same position in finite time,

\item or there exists $t_k>t$ such that $|Max\mathcal{P}(t_k)|\leq2$ :  from Lemmas~\ref{lem:cas1} and~\ref{lem:cas2} all the robots are located at the same position in finite time.
\end{itemize}
\end{proof} 

\section{Conclusion}
\label{sec:conc}

Assuming strong multiplicity detection, we provided a complete characterization 
(necessary and sufficient conditions) to solve the gathering problem. 
Note that we do not know whether strong multiplicity detection is a necessary condition
to solve the gathering problem. 
In future works, we would like to study the weakest minimal multiplicity detection 
that solves this problem and under which conditions. 
Note that the gathering problem seems to be the only positioning problem that
can be deterministically and self-stabilizing solved.  Indeed, since initially the
robots can share the same positions, there exists no deterministic algorithm to
scatter them in the plane~\cite{DP09}.

\begin{footnotesize}
\begin{small}
\bibliographystyle{plain} 
\bibliography{ngon}
\end{small}
\end{footnotesize}
\newpage
\section*{Annexes}

\paragraph{\bf Lemma~\ref{lem:clos}.}
According to Algorithm~\ref{algo:main}, if all the robots are located at the same position $p$, all the robots are located at the same position thereafter.

\paragraph{\bf Proof.}
If all the robots are located at the same position $p$, then $|Max\mathcal{P}|=1$ and all the robots are located at the unique position $p\in Max\mathcal{P}$. According to Algorithm~\ref{algo:main}, in the case $|Max\mathcal{P}|=1$ the robots located on $p$ remains idle. So, all the robots are located at the same position forever.

\paragraph{\bf Lemma~\ref{lem:cas1}.} Let $\mathcal{P}$ be an arbitrary configuration for an odd number of $n$ robots. According to Algorithm~\ref{algo:main}, if $|Max\mathcal{P}|=1$ then all the robots are located at the same position in finite time.
\paragraph{\bf Proof.}
Let $p_{max}$ be the unique point in $Max\mathcal{P}(t)$. According to Algorithm~\ref{algo:main}, the robots located on $p_{max}$ during \emph{step} $(t,t+1)$ remains idle. Moreover, according to Algorithm~\ref{algo:main} and Function $move\_to\_carefully()$, if two robots $r_i$ and $r_j$ are not at the same point at time $t$, i.e., $p_i(t)\ne p_j(t)$ then $p_i(t+1)\ne p_j(t+1)$ at time $t+1$ unless they have reached $p_{max}$. Hence, $p_{max}$ remains the unique point in $Max\mathcal{P}(t_k)$, for all $t_k\geq t$. So, according to Algorithm~\ref{algo:main} and by fairness, we deduce that $|p_{max}|=n$ in finite time.

\paragraph{\bf Lemma~\ref{lem:cas2}.}
Let $\mathcal{P}$ be an arbitrary configuration for an odd number of $n$ robots. According to Algorithm~\ref{algo:main}, if $|Max\mathcal{P}|=2$ then all the robots are located at the same position in finite time.

\paragraph{\bf Proof.}
The proof is organized as follows : 
First, we  prove that there exists  $t_k\geq t$ such that $|Max\mathcal{P}(t_k)|\ne 2$. Then, we prove that there does not exist  any time $t_k\geq t$ such that $|Max\mathcal{P}(t_k)|\geq 3$. Finally, we deduce that Lemma~\ref{lem:cas1} holds.
\begin{enumerate}
\item Assume by contradiction that there does not exist any time $t_k\geq t$ such that $|Max\mathcal{P}(t_k)|\ne 2$. Consequently, for every $t_k \geq t$, $|Max\mathcal{P}(t_k)|= 2$. Let $p_{max1}$ and $p_{max2}$ be the two points in $Max\mathcal{P}(t)$ at time $t$. According to Algorithm~\ref{algo:main}, the robots located either on $p_{max1}$ or on $p_{max2}$ during \emph{step} $(t,t+1)$ remains idle.
Moreover, according to Algorithm~\ref{algo:main} and Function $move\_to\_carefully()$, if two robots $r_i$ and $r_j$ are not at the same point at time $t$, i.e., $p_i(t)\ne p_j(t)$ then $p_i(t+1)\ne p_j(t+1)$ at time $t+1$ unless either $r_i$ and $r_j$ have reached $p_{max1}$ or $r_i$ and $r_j$ have reached $p_{max2}$. So, by induction we deduce that $p_{max1}$ and $p_{max2}$ remains the only positions in $Max\mathcal{P}(t_k)$ for every $t_k\geq t$. By fairness, we deduce that, all the robots are either at $p_{max1}$ or at $p_{max2}$ in finite time. However, since the number of robots is odd then, we have either $|p_{max1}|>|p_{max2}|$ or $|p_{max1}|<|p_{max2}|$. Hence, $|Max\mathcal{P}(t_k)|=1$ : Contradiction.
\item Assume by contradiction that there exists $t_k\geq t$ such that $|Max\mathcal{P}(t_k)|\geq 3$. Without lost of generality, we assume that $t_k$ is the first time for which $|Max\mathcal{P}(t_k)|\geq 3$. Clearly, there does not exist any time $t_l$ such that $t<t_l<t_k$ and $|Max\mathcal{P}(t_k)|=1$ : Indeed from Lemma~\ref{lem:clos} and the proof of Lemma~\ref{lem:cas1}, once there exist a unique point $p_{max}$ then, it remains the unique point in $Max\mathcal{P}$ forever and that would be a contradiction.

Hence, $|Max\mathcal{P}(t_k-1)|= 2$. 

Let $p_{max1}$ and $p_{max2}$ be the two points in $Max\mathcal{P}(t_k-1)$ at time $t_k-1$. According to Algorithm~\ref{algo:main}, the robots located either on $p_{max1}$ or on $p_{max2}$ during \emph{step} $(t,t+1)$ remains idle. Besides,  according to Algorithm~\ref{algo:main} and Function $move\_to\_carefully()$, if two robots $r_i$ and $r_j$ are not at the same point at time $t_k-1$, i.e., $p_i(t_k-1)\ne p_j(t_k-1)$ then $p_i(k)\ne p_j(k)$ at time $t_k$ unless either $r_i$ and $r_j$ have reached $p_{max1}$ or $r_i$ and $r_j$ have reached $p_{max2}$. So, $|Max\mathcal{P}(t_k)|\leq2$ at time $t_k$. A contradiction.
\end{enumerate}

From above, we deduce that if $|Max\mathcal{P}(t)|=2$ at time $t$ then, according to Algorithm~\ref{algo:main} there exists $t_k$, $t_k>t$ such that $|Max\mathcal{P}(t_k)|=1$. So, from Lemma~\ref{lem:cas1}, we know that all the robots will be located at the same position in finite time. 
\end{document}